\documentclass[%
 aps,
 pra,
 reprint,
 amsmath,amssymb,
 superscriptaddress,
 longbibliography
]{revtex4-2}
\allowdisplaybreaks

\usepackage{graphicx}
\usepackage{dcolumn}
\usepackage{bm}
\usepackage{multirow}
\usepackage{amsfonts}
\usepackage{amsthm}
\usepackage{mathtools}
\usepackage{mathrsfs}
\usepackage{xcolor}
\usepackage{textcomp}
\usepackage{booktabs}
\usepackage{cancel}
\usepackage{placeins}
\usepackage{todonotes}
\def\added#1{#1}
\def\revnote#1{}

\theoremstyle{plain}
\newtheorem{lemma}{Lemma}

\theoremstyle{definition}

\theoremstyle{remark}

\usepackage{orcidlink}
\usepackage[T1]{fontenc}
\usepackage[utf8]{inputenc}

\newcommand{\qinc}{
Quantum Innovation Centre (Q.InC), \href{https://ror.org/036wvzt09}{Agency for Science, Technology and Research (A*STAR)}, 2 Fusionopolis Way, Innovis \#08-03, Singapore 138634, Republic of Singapore\looseness=-1}

\newcommand{\sutd}{Science, Mathematics and Technology Cluster, \href{https://ror.org/05j6fvn87}{Singapore University of Technology and Design}, 8 Somapah Road, Singapore 487372, Republic of Singapore\looseness=-1}

\newcommand{\ihpc}{\href{https://ror.org/02n0ejh50}{Institute of High Performance Computing (IHPC)}, \href{https://ror.org/036wvzt09}{Agency for Science, Technology and Research (A*STAR)}, 1 Fusionopolis Way, \#16-16 Connexis, Singapore 138632, Republic of Singapore\looseness=-1}

\newcommand{\sit}{Engineering Cluster, \href{https://ror.org/01v2c2791}{Singapore Institute of Technology}, 1 Punggol Coast Road, Singapore 828608, Republic of Singapore\looseness=-1}

\hypersetup{colorlinks=true,linkcolor=blue,citecolor=blue,urlcolor=blue}

\begin{document}

\title{Entanglement in the Quantum Volunteer's Dilemma}

\author{Noah Dane Hebdon\,\orcidlink{0000-0002-2855-0798}}
\email{nhebdon2@jh.edu}
\affiliation{\qinc}

\author{Dax Enshan Koh\,\orcidlink{0000-0002-8968-591X}}
\email{dax.koh@singaporetech.edu.sg}
\affiliation{\qinc}
\affiliation{\ihpc}
\affiliation{\sutd}
\affiliation{\sit}

\begin{abstract}
A well-known model in game theory, the Volunteer's Dilemma describes a group of $n$ players who decide whether to volunteer for a collective benefit at a personal cost, or to abstain and risk forfeiting the benefit altogether. A quantum version of this dilemma, developed within the Eisert–Wilkens–Lewenstein framework, allows each player to manipulate one qubit of a shared entangled state, leading to symmetric Nash equilibria with higher expected payoffs than in the classical game. Existing analyses, however, assume maximal entanglement\added{, which is experimentally demanding for hardware-constrained quantum systems}. Within the same framework, we introduce a generalized Quantum Volunteer's Dilemma with a tunable entanglement parameter $\gamma$ and study the extent to which equilibrium behavior depends on the level of entanglement. We derive explicit conditions relating $\gamma$, the number of players, and the players' strategies under which symmetric Nash equilibria exist, focusing on two canonical strategy profiles: one for $2\leq n\leq 9$, and one for even $n$. We find that maximal entanglement is not required to sustain symmetric equilibria. Instead, equilibrium behavior persists above a threshold value, which we compute analytically in both cases. We also demonstrate that the threshold value directly depends on system size.
\end{abstract}

\maketitle

\section{Introduction}\label{sec1}

Entanglement is a central resource that distinguishes quantum information processing from its classical counterpart \cite{horodecki2009quantum}. It underlies many important applications in quantum information science, including quantum computing \cite{ekert1998quantum,jozsa2003role,bruss2011multipartite,biamonte2020entanglement,diezvalle2021quantum, you2021exploring,katabarwa2022connecting}, quantum communication \cite{horodecki2001mixed,ursin2007entanglement,zou2021quantum,xing2023fundamental,xing2024teleportation}, quantum cryptography \cite{ekert1991quantum,jennewein2000quantum,yin2020entanglement}, and quantum metrology \cite{giovannetti2011advances, huang2016usefulness, augusiak2016asymptotic}. Entanglement has also proven to be a useful strategic asset in quantum game theory, where classical games are generalized within a quantum framework \cite{eisert199quantum}.

In quantum games, players choose quantum operations to apply to their respective quantum systems, rather than being restricted to classical pure or mixed strategies \cite{eisert2000quantum,flitney2002introduction,khan2018quantum}. Such formulations can introduce non-classical correlations among players’ strategies, change payoff structures and equilibrium conditions, and enable outcomes that are inaccessible in the corresponding classical game. Quantum game theory therefore provides a natural setting in which entanglement can be studied as a strategic resource.

Early work by Meyer showed that a player with access to quantum operations can outperform a player restricted to classical strategies in a simple finite game setting~\cite{meyer1999quantum}. Eisert, Wilkens, and Lewenstein then developed one of the central frameworks for quantizing nonzero-sum games, showing through the Prisoner’s Dilemma that quantum strategies can change the equilibrium structure of a classical dilemma~\cite{eisert199quantum}. In this framework, players apply local unitary operations to a shared quantum state, with payoffs assigned according to the outcome of a final measurement.

Since these early developments, quantum game theory has been applied to a broad range of game-theoretic scenarios, including multiplayer \cite{benjamin2001multi-player,tsakiroglou2026graphencoded,essalmi2026multi-player}, continuous-variable \cite{li2002continuous}, evolutionary \cite{kay2001evolutionary}, and other generalized settings \cite{sanzmartin2025mapping}, as well as to a variety of game-theoretic models \cite{du2000nash,piotrowski2002quantum,chen2004n,nawaz2004dilemma,frackiewicz2009ultimate,consuelo2020pareto,szopa2021efficiency,kastampolidou2023quantum, frackiewicz2024permissible,frackiewicz2025permissible_four,koh2025quantum,andronikos2025ghz,essalmi2025quantum,smolinski2026quantum}, with computational tools developed to support their analysis \cite{vlachos2009quantum,kotara2023software,grzanka2025qegs}. Studies of noise have examined how noise affects quantum-game behavior \cite{johnson2001playing,chen2003quantum,flitney2004quantum,shuai2007effect,huang2016quantum,khan2018dynamics,kairon2020noisy,legon2023joint,allah2026possibility}, whilst experiments across platforms have connected quantum games to practical issues of noise and control \cite{lu2004linear,buluta2006quantum,mitra2007experimental,prevedel2007experimental,schmid2010experimental,xu2022experimental,agreda2025bridging,agreda2026experimental}. 

In parallel, various works have investigated the role of quantum resources in quantum strategic advantage \cite{du2001entanglement,du2002entanglement,ozdemir2007necessary,nawaz2010quantum,nawaz2013strategic,li2014entanglement,wei2017quantum,santos2019entanglement,mohamed2023quantum,bugu2025entanglement,varsamis2025analysis}. Among these resources, entanglement plays a central role. Du \emph{et al.}\ showed that entanglement can induce threshold behavior in the equilibrium structure \cite{du2001entanglement}, while their later work showed that it can enhance equilibrium payoffs in multiplayer games \cite{du2002entanglement}. Li and Yong showed that entanglement can guarantee the emergence of cooperation in evolutionary quantum games on networks \cite{li2014entanglement}. More recently, Bugu showed that multipartite entanglement can enhance coordination and strategic efficiency in adversarial quantum games \cite{bugu2025entanglement}. Together, these results show that entanglement is not merely an auxiliary feature of quantum games, but a resource that can shape strategic behavior itself.

This perspective motivates the study of entanglement in multiplayer collective-action games. The Volunteer’s Dilemma is a canonical example of such a problem. First introduced by Diekmann \cite{diekmann1985volunteers}, it describes a situation in which individuals in a group decide whether to volunteer, thereby securing a collective benefit at a personal cost, or to abstain, thereby avoiding the cost but risking the loss of the collective benefit altogether. The dilemma captures a broad class of social traps in which individuals prefer that a public benefit be provided, but also prefer that someone else bear the cost. It has therefore been used to study leadership emergence~\cite{smith2016leadership}, burden-sharing~\cite{weesie1998cost}, free-riding~\cite{otsubo2008dynamic}, the bystander effect~\cite{latane1970unresponsive,campos-mercade2021volunteers}, and diffusion of responsibility~\cite{darley1968bystander,barron2002private}, with applications in economic modeling~\cite{battaglini2024welfare}, climate policy~\cite{hilbe2014cooperation}, and military strategy~\cite{amir2026volunteers}.
Its scope also extends beyond human decision-making, with applications to collective behavior in animal groups \cite{searcy2010evolution, mielke2019snake, steinegger2020laboratory, heifetz2021arabian,padget2026sharing}, cancer cells \cite{morsky2018cheater,archetti2019cooperation,manini2022ecology}, and microbial populations \cite{pedroso2018impact, patel2019crystal}. Several variants have also been introduced, including asymmetric \cite{diekmann1993cooperation,weesie1993asymmetry,he2014evolutionary,healy2018cost,guo2023asymmetric}, threshold \cite{chen2013shared,mago2023greed}, timing \cite{weesie1993asymmetry,weesie1994incomplete}, and cost-sharing versions \cite{weesie1998cost,amir2026repeated,amir2026volunteers}. In particular, the cost-sharing version introduced by Weesie and Franzen allows the burden of volunteering to be shared equally among multiple volunteers~\cite{weesie1998cost}.

The Quantum Volunteer's Dilemma of Koh, Kumar, and Goh \cite{koh2025quantum} builds on the Weesie--Franzen cost-sharing formulation~\cite{weesie1998cost}.
In the underlying classical game, the pure Nash equilibria are asymmetric: equilibrium occurs when exactly one player volunteers, while the remaining players receive the collective benefit without paying the cost. A symmetric mixed-strategy equilibrium also exists, but this restores symmetry only at the level of expected payoffs. In any particular realization of the game, there remains a nonzero probability that no player volunteers, which eliminates any collective benefit. Thus, the Classical Volunteer’s Dilemma contains both an asymmetry problem, because the burden falls on a single volunteer, and a coordination problem, because symmetric play can still lead to collective failure \cite{weesie1993asymmetry}.

The Quantum Volunteer’s Dilemma revisits this coordination problem by allowing players to act on a shared entangled quantum state \cite{koh2025quantum}. Koh, Kumar, and Goh showed that, under maximal entanglement, the quantum game admits symmetric Nash equilibria with higher expected payoffs than the classical mixed equilibrium~\cite{koh2025quantum}. However, existing analyses assume that the shared state is maximally entangled. This leaves open an important question: are these symmetric equilibria a feature only of the maximally entangled case, or do they persist when the amount of entanglement is reduced?

In this paper, we address this gap by introducing a variable entanglement parameter and studying how the equilibrium behavior changes as the amount of entanglement is varied. In doing so, we characterize the amount of entanglement required for the quantum game to retain the symmetric equilibrium behavior found in the maximally entangled case.

The remainder of the paper is organized as follows. In Sec.~\ref{sec2}, we introduce the mathematical formulations of the Classical Volunteer’s Dilemma and the Quantum Volunteer’s Dilemma. In Sec.~\ref{sec3}, we generalize the quantum game by introducing the entanglement parameter $\gamma$. In Secs.~\ref{sec4} and \ref{sec5}, we analyze two strategy sets of interest and derive lower bounds on $\gamma$ in terms of the number of players $n$. Finally, in Sec.~\ref{sec6}, we summarize our results.

\section{Game-Theoretic Formulation}~\label{sec2}
\vspace{-3em}
\subsection{The Classical Game}
We consider the cost-sharing formulation developed by Weesie and Franzen \cite{weesie1998cost}. Each of the $n\in \mathbb{Z}_{\geq 2}$ players may either abstain (0) or volunteer (1). In other words, each player $i$ has access to the following strategy set $T_i = \{0,1\}$ as they attempt to maximize their respective payoff $\$_i$.
Our $n$-player game is then defined as a $2n$-tuple
\begin{align}
    G = (T_1, T_2, \ldots, T_n;\, \$_1, \$_2, \ldots, \$_n),
\end{align}
where each $T_i$ is a set and each \[\$_i : T_1 \times T_2 \times \cdots \times T_n \to \mathbb{R}\]
is a real-valued function. We also define \[x = (x_1, x_2, \ldots, x_n)\in\{0,1\}^n\] as the game's strategy profile with the Hamming weight function producing the total number of volunteers: \[\mathrm{wt}(x) = \left|\{ i \in [n] : x_i = 1 \}\right|.\]

We assume that each player is rational and attempts to maximize their payoff based on the following conditions. If no players volunteer, then all players receive a payoff of zero. If $k > 0$ players volunteer, each
non-volunteer receives benefit $b$ while each volunteer receives $b - \frac{c}{k}$.
Throughout this work, we set $b = 2$ and $c = 1$ for simplicity. \added{We also use $\mathbb{I}[P]$ to denote the indicator of a proposition $P$, which is equal to $1$ if $P$ is true and $0$ otherwise.} The payoff function for player $i$ is therefore
\begin{align}
    \$_i^{\mathrm{VD}}(x)
    =
    \begin{cases}
        2 - \dfrac{1}{\mathrm{wt}(x)}, & x_i = 1, \\[6pt]
        2\,\added{\mathbb{I}}[\mathrm{wt}(x) > 0], & x_i = 0 .
    \end{cases}
    \label{eq:payoffFirst}
\end{align}

\subsection{Nash Equilibria}
To evaluate the quality of a strategy profile, we use the notion of a Nash equilibrium~\cite{nash1951non}, which describes a scenario where no player has an incentive to deviate from their strategy. A profile $x$ is ``stable'' in this sense if no player $i$ can improve their payoff by changing their own strategy while all others are held fixed:
\[
    \$_i(x) \ge \$_i(x \oplus e_i),
\]
where $e_i$ is the binary vector with a $1$ in the $i$-th position and $0$ elsewhere, and $\oplus$ denotes bitwise addition.
More generally, a strategy profile $s = (s_1, \ldots, s_n)$ is a Nash equilibrium of a game $G$ if, for every player $i \in [n]$ and every alternative strategy $t_i \in T_i \setminus \{s_i\}$,
\begin{align}
    \$_i(s_1, \ldots, s_i, \ldots, s_n) \ge \$_i(s_1, \ldots, t_i, \ldots, s_n).
\end{align}

In the classical Volunteer's Dilemma, the only Nash equilibria are those in which exactly one player volunteers, i.e., $\mathrm{wt}(x)=1$. These equilibria are inherently asymmetric: one player incurs the cost while the others free-ride, leading to unequal payoffs. They are also fragile in practice, since coordinating to produce a single volunteer is difficult.
Several variants attempt to address this. Mixed-strategy formulations, for example, assign a common probability of volunteering across players~\cite{diekmann1985volunteers,weesie1998cost}. This restores symmetry at the level of expected payoffs, but it does not eliminate the possibility that no players volunteer. The quantum formulation of the game, however, addresses the issue well.

\subsection{The Quantum Game}
In the Quantum Volunteer's Dilemma, introduced by Koh, Kumar, and Goh, entanglement enables a form of coordinated volunteering \cite{koh2025quantum}. \added{Within the Eisert–Wilkens–Lewenstein (EWL) quantization framework \cite{eisert199quantum}, the game admits symmetric Nash equilibria that generally yield higher expected payoffs than the classical mixed-strategy equilibrium. In this version, the shared entangled multi-qubit state is prepared using the entanglement operator
    \begin{align}
        J = \mathrm{e}^{- \mathrm{i}\frac{\pi}{4} \, Y^{\otimes n}}
        = \frac{1}{\sqrt{2}}\left(I - \mathrm{i} Y^{\otimes n}\right),
    \end{align}
    where $Y = \begin{pmatrix}
    0 & -\mathrm{i} \\ \mathrm{i} & 0
    \end{pmatrix}$ is the Pauli-$Y$ matrix. Applying $J$ to the initial state $\lvert 0\rangle^{\otimes n}$ gives
    \begin{align}
        J\lvert 0\rangle^{\otimes n}=\frac{1}{\sqrt{2}}\left(\lvert 0\rangle^{\otimes n}-\mathrm{i}^{\,n+1}\lvert 1\rangle^{\otimes n}\right).
    \end{align}}
\indent Instead of simply \added{deciding} to either volunteer or abstain, the players use quantum strategies to manipulate the entangled state. To that end, each player selects from the two-parameter family of unitaries $\{U(\theta,\phi) : \theta \in [0,4\pi), \ \phi \in [0,2\pi)\}$,
where
\begin{align}
    U(\theta,\phi) =
    \begin{pmatrix}
        \mathrm{e}^{\mathrm{i}\phi}\cos\left(\frac{\theta}{2}\right) & \sin\left(\frac{\theta}{2}\right) \\
        -\sin\left(\frac{\theta}{2}\right) & \mathrm{e}^{-\mathrm{i}\phi}\cos\left(\frac{\theta}{2}\right)
    \end{pmatrix}.
\end{align}
\added{This two-parameter family is the restricted EWL strategy set adopted in Ref.~\cite{koh2025quantum}. It can be written as
\[
    U(\theta,\phi)=R_z(-\phi)R_y(-\theta)R_z(-\phi),
\]
where $R_y(\alpha)=\mathrm{e}^{-\mathrm{i}\alpha \frac{Y}{2}}$ and $R_z(\alpha)=\mathrm{e}^{-\mathrm{i}\alpha \frac{Z}{2}}$. Thus, $\theta$ controls mixing through a $Y$ rotation, while $\phi$ controls relative phases through $Z$ rotations.}

\added{Each player's strategy set is represented by $\Theta:= [0,4\pi)\times[0,2\pi)$, with player $i$'s strategy given by $(\theta_i,\phi_i)\in\Theta$, corresponding to the operation $U(\theta_i,\phi_i)$. We define $\Theta^n$ as the $n$-fold Cartesian product of $\Theta$, with strategy profiles written as $
(\theta,\phi):=\bigl((\theta_1,\phi_1),\ldots,(\theta_n,\phi_n)\bigr)$.}

\added{All subsequent Nash-equilibrium results are relative to the choice of strategy set $\Theta$ and may depend on the class of allowed local operations. Once a strategy profile has been selected and the corresponding local operations have been applied, a final joint measurement determines which players volunteer and the payoff each player receives.} \revnote{"Key Definitions" moved from Sec.~III A to this position for clarity.}

\subsubsection{Key Definitions}
We now introduce several definitions developed by Koh, Kumar, and Goh~\cite{koh2025quantum} that will be used throughout the manuscript.

Let each player choose the parameters $\theta = (\theta_1, \ldots, \theta_n) \in \added{[0,4\pi)^n}$
and $\phi = (\phi_1, \ldots, \phi_n) \in [0,2\pi)^n$. This yields the following pre-measurement state:
\begin{align}
    \lvert \psi_f(\theta,\phi) \rangle
    =
    J^\dagger
    \left(
    \bigotimes_{i=1}^{n} U(\theta_i,\phi_i)
    \right)
    J \lvert 0 \rangle^{\otimes n}.
\end{align}

For a binary strategy profile $x \in \{0,1\}^n$, let
$\bar{x}$ denote its bitwise complement, with
$\bar{x}_i = 1 - x_i$ for each $i$. We can then define the amplitude and probability of observing the binary string $\bar{x}$ when the state $\lvert \psi_f \rangle$ is measured in the computational basis as $a_{\theta,\phi}(x)$ and $p_{\theta,\phi}(x)$, respectively. The amplitude and probability of measuring $\bar{x}$ are as follows:

\begin{align}
        a_{\theta,\phi}(x)
        &= \langle \bar{x} \mid \psi_f(\theta,\phi) \rangle \nonumber\\
        &= \langle \bar{x} | J^\dagger\cdot  \bigotimes_{i=1}^{n} U(\theta_i,\phi_i)\cdot
        J | 0\rangle^{\otimes n},
    \label{eq:amp1}
\end{align}
and
\begin{align}
        p_{\theta,\phi}(x)
        &= | a_{\theta,\phi}(x)|^2 \nonumber\\
        &= | \langle \bar{x} \mid \psi_f(\theta,\phi) \rangle|^2 \nonumber\\
        &= \left|\langle \bar{x} | J^\dagger \cdot
        \bigotimes_{i=1}^{n} U(\theta_i,\phi_i)\cdot
        J | 0\rangle^{\otimes n}\right|^2.
\end{align}
The expected payoff of player $i$ is given by the following:
\begin{align}
    \$_i(\theta,\phi)
    =
    \sum_{x \in \{0,1\}^n}
    \$_i^{\mathrm{VD}}(x)\, p_{\theta,\phi}(x),
\end{align}
where $\$_i^{\mathrm{VD}}$ represents the payoff function in the deterministic Volunteer's Dilemma as defined in Eq.~(\ref{eq:payoffFirst}). The payoff for player $i\in[n]$ is given by
\begin{align}
        \$_i(\theta,\phi)
        &= 2\sum_{\substack{x \in{\{0,1\}^n} \\ x\neq 0^n \\ x_i = 0}} p_{\theta,\phi}(x)+\sum_{k=1}^{n} \left(2-\frac{1}{k}\right)\sum_{\substack{x \in{\{0,1\}^n} \\ x_i=1 \\ \mathrm{wt}(x)=k}}p_{\theta,\phi}(x).
    \label{eq:paylong}
\end{align}
To simplify future operations, we also include the following definitions:
\begin{equation}
    \begin{aligned}
        c_{\theta, \phi}(x) = \cos\left({\sum_{i:x_i=1}\phi_i}\right)\prod_{i:x_i=0}\sin\left({\frac{\theta_i}{2}}\right)\prod_{i:x_i=1}\cos\left({\frac{\theta_i}{2}}\right),
    \end{aligned}
    \label{eq:c}
\end{equation}
\begin{equation}
    \begin{aligned}
        s_{\theta, \phi}(x) = \sin\left({\sum_{i:x_i=0}\phi_i}\right)\prod_{i:x_i=0}\cos\left({\frac{\theta_i}{2}}\right)\prod_{i:x_i=1}\sin\left({\frac{\theta_i}{2}}\right),
    \end{aligned}
    \label{eq:s}
\end{equation}
\begin{equation}
    \begin{aligned}
        w_{\theta, \phi}(x) = \sin\left({\sum_{i:x_i=1}\phi_i}\right)\prod_{i:x_i=0}\sin\left({\frac{\theta_i}{2}}\right)\prod_{i:x_i=1}\cos\left({\frac{\theta_i}{2}}\right).
    \end{aligned}
    \label{eq:w}
\end{equation}

Throughout their work~\cite{koh2025quantum}, Koh, Kumar, and Goh demonstrate two cases of symmetric Nash equilibria. The first concerns the strategy profile $Q^n=(0, \frac{\pi}{n})^n$, where a Nash equilibrium is shown to exist for $2\leq n\leq9$. The second concerns the strategy profile $A^n = (0, \frac{\pi}{2})^n$, where Nash equilibria exist for even $n$. In both cases, however, maximal entanglement is assumed, which is \added{experimentally demanding} for current quantum devices.

\added{Our analysis focuses specifically on these two previously identified symmetric strategy profiles and determines the entanglement levels for which they remain Nash equilibria. The resulting thresholds therefore do not exclude the existence of other symmetric equilibria, including potentially $\gamma$-dependent strategy profiles, below these thresholds.}

\section{Generalized Entanglement}\label{sec3}

\subsection{Variable Entanglement}
\added{
    To introduce variable entanglement, we generalize the entanglement operator to
    \begin{align}
        J^\gamma=\mathrm{e}^{-\mathrm{i}\frac{\gamma}{2}Y^{\otimes n}},
    \end{align}
    where $\gamma\in[0,\tfrac{\pi}{2}]$. The corresponding final state is
    \begin{align}
        \lvert\psi_f^\gamma(\theta,\phi)\rangle=(J^\gamma)^\dagger\left(\bigotimes_{i=1}^{n}U(\theta_i,\phi_i)\right)J^\gamma\lvert0\rangle^{\otimes n}.
    \end{align}
    This parametrization preserves the EWL entangling structure of Ref.~\cite{koh2025quantum} while continuously varying the entanglement of the resulting GHZ-class state. Across any nontrivial bipartition, the von Neumann entropy increases from $0$ at $\gamma=0$ to $1$ at $\gamma=\frac{\pi}{2}$.\\
    \indent If we define the fidelity between pure states as $F(\lvert\psi\rangle,\lvert\phi\rangle)=|\langle\psi\vert\phi\rangle|^2$, the fidelity of this state with the maximally entangled state at $\gamma=\frac{\pi}{2}$ is
    \begin{align}
        F\left(\lvert\psi(\gamma)\rangle,\lvert\psi(\tfrac{\pi}{2})\rangle\right)=\frac{1}{2}\left(1+\sin(\gamma)\right),
        \label{eq:fidelity}
    \end{align}
    which increases from $\frac{1}{2}$ at $\gamma=0$ to $1$ at $\gamma=\frac{\pi}{2}$. Other multipartite resource states, such as W states or graph states, as well as mixed states arising from decoherence, are outside the scope of this work. Whether the fidelity thresholds derived below extend to more general input states remains an open question.
}

\subsection{Probability Amplitude and Mass}
To allow for variable levels of entanglement in the Quantum Volunteer’s Dilemma, we first derive the generalized probability amplitude $a^\gamma_{\theta,\phi}(x)$ for observing the bit string $\bar{x}$ when the final state $\lvert \psi_f^\gamma \rangle$ is measured in the computational basis.
 We then define the probability distribution $p^\gamma_{\theta, \phi}(x)$. The proofs for the following lemmas can be found in Appendix~\ref{app:lemma1} and Appendix~\ref{app:lemma2}, respectively.

\begin{lemma}
\label{lemma:prob_amplitude}
The generalized probability amplitude associated with observing the bit string $\bar{x}$, for entanglement parameter $\gamma \in [0,\pi/2]$, is
\begin{align}
            a^\gamma_{\theta, \phi}(x) &= (-1)^{\mathrm{wt}(\bar{x})}c_{\theta, \phi}(x) + \mathrm{i}(-1)^{\mathrm{wt}(\bar{x})}\cos({\gamma})w_{\theta, \phi}(x)\nonumber\\
            &\quad-(\mathrm{i}^n)\sin({\gamma})s_{\theta, \phi}(x).
        \label{eq:amp2}
    \end{align}
\end{lemma}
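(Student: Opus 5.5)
We compute $a^\gamma_{\theta,\phi}(x)=\langle\bar x|(J^\gamma)^\dagger\bigl(\bigotimes_i U(\theta_i,\phi_i)\bigr)J^\gamma|0\rangle^{\otimes n}$ directly, using that $Y^{\otimes n}$ squares to the identity. This gives
\[
J^\gamma=\cos(\tfrac{\gamma}{2})I-\mathrm{i}\sin(\tfrac{\gamma}{2})Y^{\otimes n}.
\]
Since $Y|0\rangle=\mathrm{i}|1\rangle$ and $Y|1\rangle=-\mathrm{i}|0\rangle$, we have
\[
J^\gamma|0\rangle^{\otimes n}=\cos(\tfrac{\gamma}{2})|0^n\rangle-\mathrm{i}^{n+1}\sin(\tfrac{\gamma}{2})|1^n\rangle.
\]
Acting with $(J^\gamma)^\dagger=\cos(\tfrac{\gamma}{2})I+\mathrm{i}\sin(\tfrac{\gamma}{2})Y^{\otimes n}$ on the bra side gives
\[
\langle\bar x|(J^\gamma)^\dagger=\cos(\tfrac{\gamma}{2})\langle\bar x|+\mathrm{i}\sin(\tfrac{\gamma}{2})\,\langle\bar x|Y^{\otimes n},
\]
and $\langle\bar x|Y^{\otimes n}$ is a phase times $\langle x|$. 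The phase is $\mathrm{i}^{n}$ times $(-1)$ raised to the number of zeros (or ones) in $\bar x$; we fix the sign by checking $Y^\dagger=Y$ entrywise.

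The amplitude therefore splits into four matrix elements of the product unitary $V=\bigotimes_i U(\theta_i,\phi_i)$: $\langle\bar x|V|0^n\rangle$, $\langle\bar x|V|1^n\rangle$, $\langle x|V|0^n\rangle$ and $\langle x|V|1^n\rangle$. Each factorizes over qubits. Reading off the single-qubit entries,
\[
\langle 0|U|0\rangle=\mathrm{e}^{\mathrm{i}\phi}\cos\tfrac{\theta}{2},\quad \langle1|U|0\rangle=-\sin\tfrac{\theta}{2},\quad \langle0|U|1\rangle=\sin\tfrac{\theta}{2},\quad \langle1|U|1\rangle=\mathrm{e}^{-\mathrm{i}\phi}\cos\tfrac{\theta}{2}.
\]
Consequently $\langle\bar x|V|0^n\rangle=(-1)^{\mathrm{wt}(\bar x)}\mathrm{e}^{\mathrm{i}\sum_{x_i=1}\phi_i}\prod_{x_i=0}\sin\frac{\theta_i}{2}\prod_{x_i=1}\cos\frac{\theta_i}{2}$, and the other three have the analogous products appearing in $c$, $s$, $w$.

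Collecting terms, the coefficients are $\cos^2(\gamma/2)$, $\sin^2(\gamma/2)$ and cross terms $\sin(\gamma/2)\cos(\gamma/2)$. We then rewrite them with the double-angle identities
\[
\cos^2\tfrac{\gamma}{2}-\sin^2\tfrac{\gamma}{2}=\cos\gamma,\qquad 2\sin\tfrac{\gamma}{2}\cos\tfrac{\gamma}{2}=\sin\gamma.
\]
The two "diagonal" contributions ($\langle\bar x|V|0^n\rangle$ weighted by $\cos^2$, and $\langle x|V|1^n\rangle$ weighted by $\sin^2$ with phases $\mathrm{i}^{n+1}\cdot\mathrm{i}\cdot\mathrm{i}^n\cdot(\pm1)$) both carry the product in $c,w$. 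One carries phase $\mathrm{e}^{\mathrm{i}\Sigma}$ and the other $\mathrm{e}^{-\mathrm{i}\Sigma}$, where $\Sigma=\sum_{x_i=1}\phi_i$. Adding them gives
\[
(\cos^2\tfrac{\gamma}{2}+\sin^2\tfrac{\gamma}{2})\cos\Sigma+\mathrm{i}(\cos^2\tfrac{\gamma}{2}-\sin^2\tfrac{\gamma}{2})\sin\Sigma,
\]
which yields $c_{\theta,\phi}(x)+\mathrm{i}\cos\gamma\,w_{\theta,\phi}(x)$ times $(-1)^{\mathrm{wt}(\bar x)}$.

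The two cross terms, $\langle\bar x|V|1^n\rangle$ and $\langle x|V|0^n\rangle$, carry the product in $s$ with phases $\mathrm{e}^{\pm\mathrm{i}\sum_{x_i=0}\phi_i}$. Their real parts should cancel, leaving $-\mathrm{i}^n\sin\gamma\,s_{\theta,\phi}(x)$.

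The main obstacle is the sign and phase bookkeeping: the factors $(-1)^{\mathrm{wt}(\bar x)}$ versus $(-1)^{\mathrm{wt}(x)}$, the powers $\mathrm{i}^{n}$ and $\mathrm{i}^{n+1}$, and verifying that the cross terms combine into a sine (not a cosine) with exactly the prefactor $-\mathrm{i}^n$. This requires relating $(-1)^{\mathrm{wt}(x)}$ to $(-1)^{\mathrm{wt}(\bar x)}$ through $(-1)^n$, so that $\mathrm{i}^{2n}(-1)^{\mathrm{wt}(\bar x)}$-type factors collapse. The approach is to track each phase symbolically and then check the result at $\gamma=\pi/2$ against the maximally entangled amplitude of Koh, Kumar and Goh~\cite{koh2025quantum}, and at $\gamma=0$, where the amplitude must reduce to the unentangled product $\langle\bar x|V|0^n\rangle$.
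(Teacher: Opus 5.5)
Your proposal is correct and follows the paper's route. Both expand $J^\gamma=\cos(\tfrac{\gamma}{2})I-\mathrm{i}\sin(\tfrac{\gamma}{2})Y^{\otimes n}$ on both sides and factor the four matrix elements of $V$ qubit by qubit. Both then merge the $\cos^2(\tfrac{\gamma}{2})$ and $\sin^2(\tfrac{\gamma}{2})$ terms via $\cos^2\alpha\,\mathrm{e}^{\mathrm{i}\beta}+\sin^2\alpha\,\mathrm{e}^{-\mathrm{i}\beta}=\cos\beta+\mathrm{i}\cos(2\alpha)\sin\beta$, and combine the cross terms into a sine.

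The sign you left open resolves as $\langle\bar x|Y^{\otimes n}=\mathrm{i}^n(-1)^{\mathrm{wt}(x)}\langle x|$, i.e. the count of zeros of $\bar x$. This $(-1)^{\mathrm{wt}(x)}$ cancels the $(-1)^{\mathrm{wt}(x)}$ from the $-\sin(\theta_i/2)$ entries of $\langle x|V|0^n\rangle$, which is what makes the cross terms combine for every parity of $n$ as $\mathrm{i}^{n+1}\sin(\tfrac{\gamma}{2})\cos(\tfrac{\gamma}{2})\bigl(\mathrm{e}^{\mathrm{i}\Sigma_0}-\mathrm{e}^{-\mathrm{i}\Sigma_0}\bigr)$ times the $\theta$-product in $s_{\theta,\phi}(x)$, with $\Sigma_0=\sum_{i:x_i=0}\phi_i$. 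That is exactly $-\mathrm{i}^n\sin\gamma\,s_{\theta,\phi}(x)$; the other sign choice would turn the sine into a cosine for odd $n$.
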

\begin{lemma}
\label{lemma:prob}
The generalized probability of observing the bit string $\bar{x}$, with entanglement parameter $\gamma \in [0,\pi/2]$, is
        \begin{align}
            p_{\theta,\phi}^\gamma(x)&=c_{\theta,\phi}^2(x) + \sin^2(\gamma)s^2_{\theta, \phi}(x)+\cos^2(\gamma)w^2_{\theta, \phi}(x) \nonumber\\
            &\quad-s_{\theta, \phi}(x)(-1)^{\mathrm{wt}(x)} 
            \nonumber\\
            &\quad \times
            \begin{cases}
            2(-1)^{\frac{n}{2}}\sin(\gamma)c_{\theta, \phi}(x), & n\text{ even},\\
            (-1)^{\frac{n+1}{2}}\sin(2\gamma)w_{\theta, \phi}(x), & n\text{ odd}.
            \end{cases}
        \label{eq:prob1}
    \end{align}
\end{lemma}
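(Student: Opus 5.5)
The plan is to take the amplitude from Lemma~\ref{lemma:prob_amplitude} and compute $p^\gamma_{\theta,\phi}(x)=|a^\gamma_{\theta,\phi}(x)|^2$ directly. The three quantities $c_{\theta,\phi}(x)$, $s_{\theta,\phi}(x)$ and $w_{\theta,\phi}(x)$ in Eqs.~(\ref{eq:c})--(\ref{eq:w}) are all real. Hence the only source of complex structure in Eq.~(\ref{eq:amp2}) is the explicit factors $\mathrm{i}$ and $\mathrm{i}^n$. Write $\sigma=(-1)^{\mathrm{wt}(\bar{x})}$, so that
\[
a^\gamma_{\theta,\phi}(x)=\sigma c+\mathrm{i}\,\sigma\cos(\gamma)\,w-\mathrm{i}^n\sin(\gamma)\,s .
\]
The first two terms form a complex number $\sigma(c+\mathrm{i}\cos\gamma\, w)$ with real part $\sigma c$ and imaginary part $\sigma\cos\gamma\,w$. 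The third term is purely real when $n$ is even and purely imaginary when $n$ is odd. This dictates the case split in the statement.

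\textbf{Case $n$ even.} Here $\mathrm{i}^n=(-1)^{n/2}$, so the real part is $\sigma c-(-1)^{n/2}\sin\gamma\, s$ and the imaginary part is $\sigma\cos\gamma\, w$. Squaring and adding gives
\[
c^2+\sin^2\gamma\, s^2-2\sigma(-1)^{n/2}\sin\gamma\, s c+\cos^2\gamma\, w^2 ,
\]
using $\sigma^2=1$.

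\textbf{Case $n$ odd.} Here $\mathrm{i}^n=\mathrm{i}(-1)^{(n-1)/2}$, so the real part is $\sigma c$ and the imaginary part is $\sigma\cos\gamma\, w-(-1)^{(n-1)/2}\sin\gamma\, s$. Squaring gives
\[
c^2+\cos^2\gamma\, w^2+\sin^2\gamma\, s^2-2\sigma(-1)^{(n-1)/2}\sin\gamma\cos\gamma\, ws .
\]
We then use $2\sin\gamma\cos\gamma=\sin 2\gamma$ and $-(-1)^{(n-1)/2}=(-1)^{(n+1)/2}$ to rewrite the cross term as $\sigma(-1)^{(n+1)/2}\sin(2\gamma)\,ws$.

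The only remaining step is matching signs with Eq.~(\ref{eq:prob1}), and this is where care is needed. We have $\mathrm{wt}(\bar x)=n-\mathrm{wt}(x)$, so $\sigma=(-1)^{n}(-1)^{\mathrm{wt}(x)}$.
\begin{itemize}
\item For $n$ even, $\sigma=(-1)^{\mathrm{wt}(x)}$. The cross term is then $-s(-1)^{\mathrm{wt}(x)}\cdot 2(-1)^{n/2}\sin\gamma\, c$, exactly as claimed.
\item For $n$ odd, $\sigma=-(-1)^{\mathrm{wt}(x)}$. The cross term is then $-s(-1)^{\mathrm{wt}(x)}(-1)^{(n+1)/2}\sin(2\gamma)\,w$, again as claimed.
\end{itemize}

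The main obstacle is not conceptual. It is the sign bookkeeping, namely the conversion between $(-1)^{\mathrm{wt}(\bar x)}$ and $(-1)^{\mathrm{wt}(x)}$ and the reduction of $\mathrm{i}^n$ in the two parity classes. I would check this step against a small case, say $n=2$ and $n=3$ at $\gamma=\pi/2$, against the known maximal-entanglement formulas of Ref.~\cite{koh2025quantum}.
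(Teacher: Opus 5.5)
Your proposal is correct and takes essentially the same route as the paper. You compute $|a^\gamma_{\theta,\phi}(x)|^2$ from Lemma~\ref{lemma:prob_amplitude}, split on the parity of $n$ according to whether $\mathrm{i}^n$ is real or imaginary, and square the real and imaginary parts using that $c$, $s$, $w$ are real. You then convert $(-1)^{\mathrm{wt}(\bar{x})}$ into $(-1)^{\mathrm{wt}(x)}$ via $(-1)^{\mathrm{wt}(\bar{x})}=(-1)^n(-1)^{\mathrm{wt}(x)}$, which is just a tidier version of the paper's exponent manipulations and gives the same cross terms in both cases.
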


\section{Symmetric Nash Equilibria for \texorpdfstring{$\mathbf{Q^n=(0, \frac{\pi}{n})^n}$}{Qn = (0, pi/n)n} Strategy Profile}\label{sec4}
We first consider the game $G^{(n)}_{\mathrm{QVD}}$ where each player employs the strategy $Q:=(0, \frac{\pi}{n})\in\Theta$. The strategy profile is then $Q^n=(Q,Q,\ldots ,Q)=((0, \frac{\pi}{n}),\ldots, (0, \frac{\pi}{n}))\in\Theta^n$. So, each player applies the following unitary operator:
    \begin{align}
        U\left(0,\frac{\pi}{n}\right)=
        \begin{pmatrix}
        \mathrm{e}^{\mathrm{i}\frac{\pi}{n}} & 0 \\
        0 & \mathrm{e}^{-\mathrm{i}\frac{\pi}{n}}
        \end{pmatrix}.
    \end{align}

We first show that by employing the $Q$ strategy, regardless of the level of entanglement, each player is rewarded with a payoff of $2-\frac{1}{n}$. We then show that a tight lower bound on $\gamma$ for the existence of Nash equilibria is given by

\begin{align}
        \gamma_{Q^n}^{\min}(n) = \arcsin\left(\frac{1}{\sqrt{n}\sin\left(\frac{\pi}{n}\right)}\right).
    \end{align}
\subsection{Payoff for Strategy Profile \texorpdfstring{$\mathbf{Q^n}$}{Qn}}
It has been shown previously that, in the case of maximal entanglement, every player volunteers with probability $1$ \cite{koh2025quantum}. In other words, the probability distribution is given by $p_{Q^n}(x) = \added{\mathbb{I}}[x = 1^n]$. The corresponding payoff of each player $i \in [n]$ has also been shown to be $2 - \frac{1}{n}$.

To show that this still holds in the general case with variable entanglement $\gamma$, we evaluate Eq.~\eqref{eq:c}, Eq.~\eqref{eq:s}, and Eq.~\eqref{eq:w} for this strategy profile. In Koh, Kumar, and Goh's work~\cite{koh2025quantum}, $c_{Q^n}(x)$ and $s_{Q^n}(x)$ are defined as Eq.~(42) and Eq.~(43), respectively. They are evaluated as $c_{Q^n}(x)=-\added{\mathbb{I}}[x=1^n]$, and $s_{Q^n}(x)=0$. We evaluate $w_{Q^n}(x)$ in a similar fashion in Appendix~\ref{app:q1} and find that $w_{Q^n}(x)=0$.
Hence,
\begin{align}
    p_{Q^n}^{\gamma}(x)=c_{Q^n}^2(x)=\left(-\added{\mathbb{I}}[x=1^n]\right)^2=\added{\mathbb{I}}[x=1^n].
\end{align}

Since $p_{Q^n}^{\gamma}(1^n)=1$, every player volunteers with probability 1, just as in the case with maximal entanglement. Evaluating Eq.~(\ref{eq:paylong}) with this value indicates that the generalized payoff with $\gamma$ is the same as the payoff where $\gamma=\frac{\pi}{2}$, which was originally evaluated in Eq.~(46) of Koh, Kumar, and Goh's work \cite{koh2025quantum}:
\begin{align}
        \$_i(Q^n)=\sum_{k=1}^{n}\left(2-\frac{1}{k}\right)\delta_{k,n}=2-\frac{1}{n}.
    \label{eq:payshortq}
\end{align}

\subsection{Lower Bound on Entanglement for Strategy Profile \texorpdfstring{$\mathbf{Q^n}$}{Qn}}
We now understand that all players will receive a payoff of $2-\frac{1}{n}$ if they adhere to strategy $Q$. This allows us to identify a condition on $n$ and $\gamma$ for the strategy profile $Q^n$ to be a Nash equilibrium. To that end, we consider the case where one player $a$ deviates from strategy $Q$ and determine the conditions under which player $a$ has access to a greater payoff than the $2-\frac{1}{n}$ upper bound. Any instance where player $a$'s payoff is greater than $2-\frac{1}{n}$ is not a Nash equilibrium. Note that all other players $i\neq a$ maintain strategy $Q$ and receive the payoff $2-\frac{1}{n}$.

We consider $c_{\theta, \phi}(x)$, $s_{\theta, \phi}(x)$, and $w_{\theta, \phi}(x)$ in this scenario. Koh, Kumar, and Goh previously developed formulas for $c_{\theta, \phi}(x)$ and $s_{\theta, \phi}(x)$ under these conditions in Eq.~(47) and Eq.~(48) of~\cite{koh2025quantum}, respectively. We similarly develop $w_{\theta, \phi}(x)$ in Appendix~\ref{app:q2} to get the following:
\begin{widetext}
\begin{align}
    c_{\theta, \phi}(x)\bigg|_{(\theta_i, \phi_i)=Q, \forall i\neq a}&=
        \begin{cases}
            -\cos\left({\frac{\pi}{n}}\right)\sin\left({\frac{\theta_a}{2}}\right), &\text{ for } x_a=0, x_i=1, \forall i \neq a,\\
            -\cos\left({\frac{\pi}{n}-\phi_a}\right)\cos\left({\frac{\theta_a}{2}}\right), &\text{ for } x_a=1, x_i=1, \forall i \neq a,\\
            0, &\text{ otherwise},
        \end{cases}\nonumber\\
        s_{\theta, \phi}(x)\bigg|_{(\theta_i, \phi_i)=Q, \forall i\neq a}&=
        \begin{cases}
            \sin\left({\frac{\pi}{n}-\phi_a}\right)\cos\left({\frac{\theta_a}{2}}\right), &\text{ for } x_a=0, x_i=0, \forall i \neq a,\\
            \sin\left({\frac{\pi}{n}}\right)\sin\left({\frac{\theta_a}{2}}\right), &\text{ for } x_a=1, x_i=0, \forall i \neq a,\\
            0, &\text{ otherwise},
        \end{cases}\nonumber\\
        w_{\theta, \phi}(x)\bigg|_{(\theta_i, \phi_i)=Q, \forall i\neq a}&=
        \begin{cases}
            \sin\left({\frac{\pi}{n}}\right)\sin\left({\frac{\theta_a}{2}}\right), &\text{ for } x_a=0, x_i=1, \forall i \neq a,\\
            \sin\left({\frac{\pi}{n}-\phi_a}\right)\cos\left({\frac{\theta_a}{2}}\right), &\text{ for } x_a=1, x_i=1, \forall i \neq a,\\
            0, &\text{ otherwise}.
        \end{cases}
\end{align}
It is important to note that there are disjoint supports between $c_{\theta, \phi}(x)$ and $s_{\theta, \phi}(x)$. In a similar way, there are disjoint supports between $w_{\theta, \phi}(x)$ and $s_{\theta, \phi}(x)$. Hence, each of their products is zero, which then gives

\begin{align}
        p^\gamma_{\theta, \phi}(x)\bigg|_{(\theta_i, \phi_i)=Q, \forall i\neq a}
        &=\cos^2\left({\frac{\pi}{n}-x_a\phi_a}\right)\sin^{2(1-x_a)}\left({\frac{\theta_a}{2}}\right)\cos^{2x_a}\left({\frac{\theta_a}{2}}\right)\added{\mathbb{I}}[\forall i \neq a, x_i=1]\nonumber\\
        &\quad + \cos^2({\gamma})\sin^2\left({\frac{\pi}{n}-x_a\phi_a}\right)\sin^{2(1-x_a)}\left({\frac{\theta_a}{2}}\right)\cos^{2x_a}\left({\frac{\theta_a}{2}}\right)\added{\mathbb{I}}[\forall i \neq a, x_i=1]\nonumber\\
        &\quad + \sin^2({\gamma})\sin^2\left({\frac{\pi}{n}-\bar{x}_a\phi_a}\right)\cos^{2(1-x_a)}\left({\frac{\theta_a}{2}}\right)\sin^{2x_a}\left({\frac{\theta_a}{2}}\right)\added{\mathbb{I}}[\forall i \neq a, x_i=0]\nonumber\\
        &=\begin{cases}
            \cos^2\left(\frac{\pi}{n}\right)\sin^2\left(\frac{\theta_a}{2}\right)
            + \cos^2(\gamma)\sin^2\left(\frac{\pi}{n}\right)\sin^2\left(\frac{\theta_a}{2}\right),
            & \text{ for } x_a=0,\ x_i=1,\forall i\neq a,\\ 
            \cos^2\left(\frac{\pi}{n}-\phi_a\right)\cos^2\left(\frac{\theta_a}{2}\right)
            + \cos^2(\gamma)\sin^2\left(\frac{\pi}{n}-\phi_a\right)\cos^2\left(\frac{\theta_a}{2}\right),
            & \text{ for } x_a=1,\ x_i=1,\forall i\neq a,\\
        \sin^2(\gamma)\sin^2\left(\frac{\pi}{n}-\phi_a\right)\cos^2\left(\frac{\theta_a}{2}\right),
            & \text{ for } x_a=0,\ x_i=0,\forall i\neq a,\\
\sin^2(\gamma)\sin^2\left(\frac{\pi}{n}\right)\sin^2\left(\frac{\theta_a}{2}\right),
            & \text{ for } x_a=1,\ x_i=0,\forall i\neq a,\\
            0, & \text{otherwise.}
        \end{cases}
    \end{align}
Above, the first case implies that $x=11\ldots1\underset{\uparrow_a}{0}11\ldots1$. Similarly, the second case implies that $x=1^n$. The third implies that $x=0^n$, and finally, the fourth case implies that $x=00\ldots0\underset{\uparrow_a}{1}00\ldots0$. With this in mind, we now substitute this result into Eq.~(\ref{eq:paylong}), which yields

    \begin{align}   &\$_a^\gamma(\theta,\phi)\bigg|_{(\theta_i, \phi_i)=Q, \forall i\neq a}
        = 2\sum_{\substack{x \in{\{0,1\}^n} \\ x\neq 0^n \\ x_a = 0}} p_{\theta,\phi}(x)
        + \sum_{k=1}^{n} \left(2-\frac{1}{k}\right)\sum_{\substack{x \in{\{0,1\}^n} \\ x_a=1 \\ \mathrm{wt}(x)=k}}p_{\theta,\phi}(x)\bigg|_{(\theta_i, \phi_i)=Q, \forall i\neq a} \nonumber\\
        &= 2p_{\theta,\phi}(1\ldots101\ldots1)+\left(2-\frac{1}{1}\right)p_{\theta,\phi}(0\ldots010\ldots0)+\left({2-\frac{1}{n}}\right)p_{\theta,\phi}(1^n)\bigg|_{(\theta_i, \phi_i)=Q, \forall i\neq a}
        \nonumber\\
        &=2\left( \cos^2\left({\frac{\pi}{n}}\right)\sin^2\left({\frac{\theta_a}{2}}\right)+\cos^2({\gamma})\sin^2\left({\frac{\theta_a}{2}}\right)\sin^2\left({\frac{\pi}{n}}\right)\right) + \sin^2({\gamma})\sin^2\left({\frac{\pi}{n}}\right)\sin^2\left({\frac{\theta_a}{2}}\right)\nonumber\\
        &\quad+ \left(2 - \frac{1}{n}\right)\left(\cos^2\left({\frac{\pi}{n}-\phi_a}\right)\cos^2\left({\frac{\theta_a}{2}}\right)+\cos^2(\gamma)\sin^2\left({\frac{\pi}{n}-\phi_a}\right)\cos^2\left({\frac{\theta_a}{2}}\right)\right).
\end{align}
\end{widetext}
\noindent To simplify this formula, we make the following substitutions:

\begin{align}
        f &= \sin^2\left({\frac{\theta_a}{2}}\right), &\ g &=\sin^2\left({\frac{\pi}{n}}\right),\nonumber\\
        h &= \sin^2\left(\gamma\right), &\
        k &= \sin^2\left({\frac{\pi}{n}}-\phi_a\right).
    \end{align}
Rearranging the terms, we then have
\begin{align}
        &\$_a^\gamma(\theta,\phi)\bigg|_{(\theta_i, \phi_i)=Q, \forall i\neq a} \nonumber\\
        &= 2f\left(\cos^2\left({\frac{\pi}{n}}\right)+g\cos^2(\gamma)\right)+fgh+\left(2-\frac{1}{n}\right) \nonumber\\
        &\quad\times\cos^2\left({\frac{\theta_a}{2}}\right)\left(\cos^2\left({\frac{\pi}{n}-\phi_a}\right)\added{+}k\cos^2(\gamma)\right)\nonumber\\
        &= 2f(1-hg) + fgh + \left(2-\frac{1}{n}\right)\cos^2\left({\frac{\theta_a}{2}}\right)(1-hk)\nonumber\\
        &= 2f(1-hg) + fgh +\left(2 - \frac{1}{n}\right)(1-f)(1-hk)\nonumber\\
        &= 2f(1-hg) + fgh + \left(2-\frac{1}{n}\right)(1-f)\nonumber\\
        &\quad -\left(2-\frac{1}{n}\right)(1-f)hk.
    \end{align}
We now compare this value to the maximum payoff that player $a$ receives by adhering to strategy $Q$, which is $2-\frac{1}{n}$.
For the current strategy to be a Nash equilibrium, it must be the case that $\$_a^\gamma(x) \leq 2-\frac{1}{n}$. We now simplify this relation to derive the condition for which a Nash equilibrium can exist within this strategy profile, beginning with
    \begin{align*}
    2-\frac{1}{n}
        &\geq
        2f(1-hg)+fgh+\left(2-\frac{1}{n}\right) \nonumber\\
        &\quad\times(1-f)
        -\left(2-\frac{1}{n}\right)(1-f)hk.
    \end{align*}
Rearranging and simplifying, we produce what follows:
    \begin{align}
        \left(2-\frac{1}{n}\right)(1-f)hk
        &\geq
        2f(1-hg)+fgh-\left(2-\frac{1}{n}\right)f,\nonumber\\
        \left(2-\frac{1}{n}\right)(1-f)hk
        &\geq
        f\left(2(1-hg)+gh-\left(2-\frac{1}{n}\right)\right),\nonumber\\
        \left(2-\frac{1}{n}\right)(1-f)hk
        &\geq
        f\left(2-2hg+hg-2+\frac{1}{n}\right),\nonumber\\
        \left(2-\frac{1}{n}\right)(1-f)hk
        &\geq
        f\left(\frac{1}{n}-hg\right).
    \end{align}
Note that in all cases, the \added{left-hand} side is nonnegative. For the inequality to hold, it must be the case that $\left(\frac{1}{n}-hg\right)\leq0$ or $\frac{1}{n} \leq \sin^2(\gamma)\sin^2\left(\frac{\pi}{n}\right)$.
Rearranging, we then get the following:
\[
\gamma \geq \arcsin\left(\frac{1}{\sqrt{n}\sin\left(\frac{\pi}{n}\right)}\right).
\]
Thus, the lower bound is
    \begin{align}
        \gamma_{Q^n}^{\min}(n) = \arcsin\left(\frac{1}{\sqrt{n}\sin\left(\frac{\pi}{n}\right)}\right).
    \label{eq:lowerbound1}
    \end{align}
\added{Utilizing Eq.~(\ref{eq:fidelity}), we discover that this threshold corresponds to the following fidelity bound:
\begin{align}
    F\left(\lvert\psi(\gamma)\rangle,\lvert\psi(\tfrac{\pi}{2})\rangle\right)\geq\frac{1}{2}\left(1+\frac{1}{\sqrt{n}\sin(\tfrac{\pi}{n})}\right).
\end{align}
}

With \added{Eq.~(\ref{eq:lowerbound1})} satisfied, the strategy profile $Q^n$ represents a Nash equilibrium for all $\Theta_a$ and $\Phi_a$. Because $\gamma \in [0, \frac{\pi}{2}]$, it is evident that $n \in [2,9]$, which is the range of $n$ for which Nash equilibria were known to exist in the case of maximal entanglement. \added{It is also clear that varying $\gamma$ does not extend the range of $n$ for which the $Q^n$ strategy profile is a Nash equilibrium. This behavior is illustrated in Fig.~\ref{fig:plot1}. As the number of players increases, the range of entanglement values $\gamma$ that preserve the Nash equilibrium initially widens slightly before steadily narrowing. Remarkably, for $n=3$, the equilibrium persists at entanglement levels slightly below $\gamma = \frac{\pi}{4}$. These results demonstrate that maximal entanglement is not required to sustain the symmetric Nash equilibrium associated with the $Q^n$ strategy profile.}

\begin{figure}[t]
    \centering
    \includegraphics[width=0.85\linewidth]{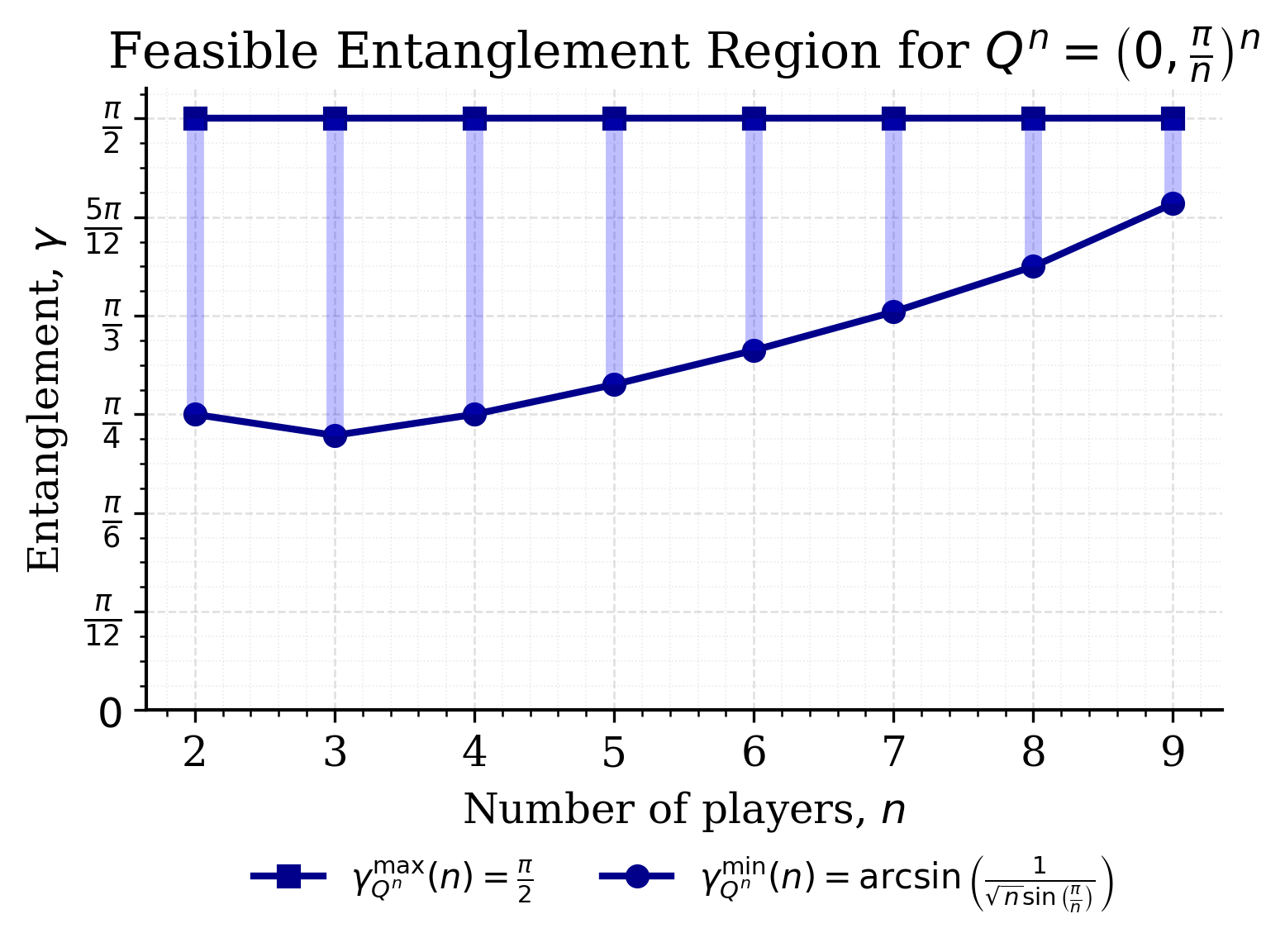}
    \caption{Lower bound on the entanglement parameter $\gamma$ as a function of the number of players $n$ for the $Q^n = \left(0, \frac{\pi}{n}\right)^n$ strategy profile. The shaded regions along discrete values of $n$ indicate that $\gamma \geq \gamma_{\min}(n)$ is satisfied, which is the condition for Nash equilibrium.}
    \label{fig:plot1}
\end{figure}

\section{Symmetric Nash Equilibria for Strategy Profile \texorpdfstring{$\mathbf{A^n=(0, \frac{\pi}{2})^n}$}{An = (0, pi/2)n}}\label{sec5}
We now consider the case where all players choose the strategy $A=\left(0, \frac{\pi}{2}\right)\in\Theta$. This results in the following strategy profile: $A^n=(A,\ldots,A)\in\Theta^n$, where each player applies the following unitary operator:
\begin{equation}
        U\left(0, \frac{\pi}{2}\right) = \begin{pmatrix}
            \mathrm{i} &0\\
            0 &-\mathrm{i}
        \end{pmatrix}=\mathrm{i}Z,
\end{equation}
where $Z = \begin{pmatrix}
            1 &0\\
            0 & -1
        \end{pmatrix}$ is the Pauli-$Z$ matrix.
We now determine the payoff of each player for the $A^n$ strategy profile. As will be demonstrated, the probability function naturally separates into the cases where $n$ is even and where $n$ is odd. We show that, regardless of the level of entanglement, \added{the strategy profile $A^n$ is not a Nash equilibrium when n is odd}. We then show that a tight lower bound on $\gamma$ for the existence of symmetric Nash equilibria for even $n$ is given by
\begin{align}
        \added{\gamma_{A^n}^{\min}(n)=\arcsin\left(\frac{1}{\sqrt{n}}\right).}
\end{align}
\medskip

\subsection{Payoff for Strategy Profile \texorpdfstring{$\mathbf{A^n}$}{An}}
To determine the payoff of each player for the $A^n$ strategy profile, we refer to the evaluation of $c_{\theta, \phi}(x)$ and $s_{\theta, \phi}(x)$ in Eq.~(53) and Eq.~(54) of Koh, Kumar, and Goh's work \cite{koh2025quantum}. In Appendix~\ref{app:a1}, we evaluate $w_{\theta, \phi}(x)$ in a similar way to get
    \begin{align}
        c_{A^n}(x)&= (-1)^{\frac{n}{2}}\added{\mathbb{I}}[n \text{ even}]\added{\mathbb{I}}[x=1^n], \nonumber\\
        s_{A^n}(x) &= (-1)^{\frac{n-1}{2}}\added{\mathbb{I}}[n \text{ odd}]\added{\mathbb{I}}[x = 0^n], \nonumber\\
        w_{A^n}(x)&=(-1)^{\frac{n-1}{2}}\added{\mathbb{I}}[n \text{ odd}]\added{\mathbb{I}}[x=1^n].
    \end{align}

We now apply these to Eq.~(\ref{eq:prob1}) and call it $p^\gamma_{A^n}(x)$. There are two cases to consider.

\paragraph{Case 1: $n$ is even.}
In this case, it follows that $s_{A^n}(x)=w_{A^n}(x)=0$, which yields
    \begin{align}
        p^\gamma_{A^n}(x) &= c_{A^n}^2(x) \nonumber\\
        &= \left((-1)^{\frac{n}{2}}\added{\mathbb{I}}[n \text{ even}]\added{\mathbb{I}}[x=1^n]\right)^2 \nonumber\\
        &=\added{\mathbb{I}}[n \text{ even}]\added{\mathbb{I}}[x=1^n].
    \end{align}

\paragraph{Case 2: $n$ is odd.}
When $n$ is odd, it follows that $c_{A^n}(x)=0$, which produces the following:
    \begin{align}
        p^\gamma_{A^n}(x)
        &= \sin^2(\gamma)s_{A^n}^2(x)
            + \cos^2(\gamma)w_{A^n}^2(x) \nonumber\\
        &\quad - s_{A^n}(x)(-1)^{\mathrm{wt}(x)}
            (-1)^{\frac{n+1}{2}}\sin(2\gamma)w_{A^n}(x) \nonumber\\
        &= \sin^2(\gamma)s_{A^n}^2(x)
            + \cos^2(\gamma)w_{A^n}^2(x) \nonumber\\
        &\quad- (-1)^{\frac{2\mathrm{wt}(x)+n+1}{2}}
            s_{A^n}(x)w_{A^n}(x)\sin(2\gamma) \nonumber\\
        &= \sin^2(\gamma)
            \left((-1)^{\frac{n-1}{2}}\added{\mathbb{I}}[n \text{ odd}]\added{\mathbb{I}}[x=0^n]\right)^2 \nonumber\\
        &\quad+ \cos^2(\gamma)
            \left((-1)^{\frac{n-1}{2}}\added{\mathbb{I}}[n \text{ odd}]\added{\mathbb{I}}[x=1^n]\right)^2 \nonumber\\
        &\quad- (-1)^{\frac{2\mathrm{wt}(x)+n+1}{2}}
            \left((-1)^{\frac{n-1}{2}}\added{\mathbb{I}}[n \text{ odd}]\added{\mathbb{I}}[x=0^n]\right) \nonumber\\
        &\quad\times
            \left((-1)^{\frac{n-1}{2}}\added{\mathbb{I}}[n \text{ odd}]\added{\mathbb{I}}[x=1^n]\right)
            \sin(2\gamma) \nonumber\\
        &= \added{\mathbb{I}}[n \text{ odd}]
            \left(
                \sin^2(\gamma)\added{\mathbb{I}}[x=0^n]+ \cos^2(\gamma)\added{\mathbb{I}}[x=1^n]
            \right).
    \end{align}
Notice that the cross term vanished on the final step, since $\added{\mathbb{I}}[x=0^n]\added{\mathbb{I}}[x=1^n]=0$. Also, note that when $n=0$, it follows that $x=0^n$ and $x=1^n$. However, $n\geq 2$. Hence,
    \begin{align}
        p^\gamma_{A^n}(x)&= \added{\mathbb{I}}[n \text{ even}]\added{\mathbb{I}}[x=1^n]+\added{\mathbb{I}}[n \text{ odd}] \nonumber\\
        &\quad\times\left(\sin^2(\gamma)\added{\mathbb{I}}[x=0^n]+\cos^2(\gamma)\added{\mathbb{I}}[x=1^n]\right).
    \end{align}
Next, we substitute this back into Eq.~(\ref{eq:paylong}), which produces the following:
\begin{widetext}
\begin{align}
    \$_i^\gamma(A^n)
        &=2\sum_{\substack{x \in{\{0,1\}^n} \nonumber\\ x\neq 0^n \\ x_i = 0}} p^\gamma_{A^n}(x)
        + \sum_{k=1}^{n} \left(2-\frac{1}{k}\right)\sum_{\substack{x \in{\{0,1\}^n} \\ x_i=1 \\ \mathrm{wt}(x)=k}}p^\gamma_{A^n}(x)
        \nonumber\\
        &= 2\sum_{\substack{x \in{\{0,1\}^n} \\ x\neq 0^n \\ x_i = 0}}\left(\added{\mathbb{I}}[n \text{ even}]\added{\mathbb{I}}[x=1^n]+\added{\mathbb{I}}[n \text{ odd}]\left(\added{\mathbb{I}}[x=0^n]\sin^2(\gamma)+\added{\mathbb{I}}[x=1^n]\cos^2(\gamma)\right)\right)
        \nonumber\\
        &\quad +\sum_{k=1}^{n} \left(2-\frac{1}{k}\right)\sum_{\substack{x \in{\{0,1\}^n} \\ x_i=1 \\ \mathrm{wt}(x)=k}}\left(\added{\mathbb{I}}[n \text{ even}]\added{\mathbb{I}}[x=1^n]+\added{\mathbb{I}}[n \text{ odd}]\left(\added{\mathbb{I}}[x=0^n]\sin^2(\gamma)+\added{\mathbb{I}}[x=1^n]\cos^2(\gamma)\right)\right)
        \nonumber\\
        &=\sum_{k=1}^{n} \left(2-\frac{1}{k}\right)\sum_{\substack{x \in{\{0,1\}^n} \\ x_i=1 \\ \mathrm{wt}(x)=k}}\left(\added{\mathbb{I}}[n \text{ even}]\added{\mathbb{I}}[x=1^n]+\added{\mathbb{I}}[n \text{ odd}]\added{\mathbb{I}}[x=1^n]\cos^2(\gamma)\right)
        \nonumber\\
        &=\sum_{k=1}^{n} \left(2-\frac{1}{k}\right)\sum_{\substack{x \in{\{0,1\}^n} \\ x_i=1 \\ \mathrm{wt}(x)=k}}\added{\mathbb{I}}[x=1^n]\left(\added{\mathbb{I}}[n \text{ even}]+\added{\mathbb{I}}[n \text{ odd}]\cos^2(\gamma)\right)
        \nonumber\\
        &= \sum_{k=1}^{n} \left(2-\frac{1}{k}\right)\delta_{k,n}\left(\added{\mathbb{I}}[n \text{ even}]+\added{\mathbb{I}}[n \text{ odd}]\cos^2(\gamma)\right)
        \nonumber\\
        &= \left(2-\frac{1}{n}\right)\left(\added{\mathbb{I}}[n \text{ even}]+\added{\mathbb{I}}[n \text{ odd}]\cos^2(\gamma)\right).
    \label{eq:paylonga}
\end{align}
\end{widetext}

\subsection{Lower Bound on Entanglement for Strategy Profile \texorpdfstring{$\mathbf{A^n}$}{An}}
With the payoff defined for players who employ $A$ in the case of an even and an odd number of players, we now allow one player $a\in[n]$ to deviate from the strategy set $A$ to determine which conditions produce a greater payoff for player $a$. Once again, any payoff greater than $\$^\gamma(A^n)$ describes a state of non-Nash equilibrium.

To develop a lower bound on $\gamma$ in terms of $n$, we refer to the evaluation of $c_{\theta, \phi}(x)$ and $s_{\theta, \phi}(x)$ in Eq.~(56) and Eq.~(59) of Koh, Kumar, and Goh's \added{published} work \cite{koh2025quantum}. We then evaluate $w_{\theta, \phi}(x)$ by a similar method. For the derivation, refer to Appendix~\ref{app:a2}. We have
\begin{widetext}
    \begin{align}
        c_{\theta, \phi}(x)\bigg|_{(\theta_i,\phi_i)=A, \forall i\neq a}&=\left((-1)^{\frac{n}{2}}\sin(x_a\phi_a)\added{\mathbb{I}}[n \text{ even}]+(-1)^{\frac{n-1}{2}}\cos(x_a\phi_a)\added{\mathbb{I}}[n \text{ odd}]\right)
        \nonumber\\
        &\quad\times\sin^{1-x_a}\left(\frac{\theta_a}{2}\right)\cos^{x_a}\left(\frac{\theta_a}{2}\right)\added{\mathbb{I}}[\forall i\neq a, x_i=1],
        \nonumber\\
        s_{\theta, \phi}(x)\bigg|_{(\theta_i,\phi_i)=A, \forall i\neq a}&=\left(-(-1)^{\frac{n}{2}}\cos(\bar{x}_a\phi_a)\added{\mathbb{I}}[n \text{ even}]+(-1)^{\frac{n-1}{2}}\sin(\bar{x}_a\phi_a)\added{\mathbb{I}}[n \text{ odd}]\right)
        \nonumber\\
        &\quad\times\cos^{1-x_a}\left(\frac{\theta_a}{2}\right)\sin^{x_a}\left(\frac{\theta_a}{2}\right)\added{\mathbb{I}}[\forall i\neq a, x_i=0],
        \nonumber\\
        w_{\theta, \phi}(x)\bigg|_{(\theta_i,\phi_i)=A, \forall i\neq a}&=\left(-(-1)^{\frac{n}{2}}\cos(x_a\phi_a)\added{\mathbb{I}}[n \text{ even}]+(-1)^{\frac{n-1}{2}}\sin(x_a\phi_a)\added{\mathbb{I}}[n \text{ odd}]\right)
        \nonumber\\
        &\quad\times\sin^{1-x_a}\left(\frac{\theta_a}{2}\right)\cos^{x_a}\left(\frac{\theta_a}{2}\right)\added{\mathbb{I}}[\forall i\neq a, x_i=1].
    \end{align}
Take note that $c_{\theta, \phi}(x)$ and $s_{\theta, \phi}(x)$ are nonzero on disjoint sets of $x$. Similarly, $s_{\theta,\phi}(x)$ and $w_{\theta, \phi}(x)$ are nonzero on disjoint sets of $x$. We enter these values into Eq.~(\ref{eq:prob1}), noting that the last term disappears:
\begin{align}
        p_{\theta, \phi}^{\gamma}(x)&\bigg|_{(\theta_i,\phi_i)=A, \forall i\neq a}=\left(c_{\theta, \phi}^2(x) + \sin^2(\gamma)s_{\theta, \phi}^2(x)+\cos^2(\gamma)w_{\theta, \phi}^2(x)\right)\bigg|_{(\theta_i,\phi_i)=A, \forall i\neq a}
        \nonumber\\
        &= \left(\sin^2(x_a\phi_a)\added{\mathbb{I}}[n \text{ even}]+\cos^2(x_a\phi_a)\added{\mathbb{I}}[n\text{ odd}]\right)\sin^{2(1-x_a)}\left(\frac{\theta_a}{2}\right)\cos^{2x_a}\left(\frac{\theta_a}{2}\right)\added{\mathbb{I}}[\forall i\neq a, x_i=1]
        \nonumber\\
        &\quad+ \sin^2(\gamma)\left(\cos^2(\bar{x}_a \phi_a)\added{\mathbb{I}}[n \text{ even}]+\sin^2(\bar{x}_a\phi_a)\added{\mathbb{I}}[n \text{ odd}]\right)\cos^{2(1-x_a)}\left(\frac{\theta_a}{2}\right)\sin^{2x_a}\left(\frac{\theta_a}{2}\right)\added{\mathbb{I}}[\forall i\neq a, x_i=0]
        \nonumber\\
        &\quad+ \cos^2(\gamma)\left(\cos^2(x_a\phi_a)\added{\mathbb{I}}[n \text{ even}]+\sin^2(x_a\phi_a)\added{\mathbb{I}}[n \text{ odd}]\right)\sin^{2(1-x_a)}\left(\frac{\theta_a}{2}\right)\cos^{2x_a}\left(\frac{\theta_a}{2}\right)\added{\mathbb{I}}[\forall i\neq a, x_i=1].
    \end{align}
We now separately consider cases where $n$ is odd and $n$ is even. For both cases, we first compute $p_{\theta, \phi}^{\gamma}(x)$ where $(\theta_i,\phi_i)=A,$ and $\forall i\neq a$. Then, we plug each result into Eq.~(\ref{eq:paylong}). We begin with the case where $n$ is odd.

\noindent\paragraph{Case 1: $n$ is odd.} In this case, we have
\begin{align}
        p_{\theta, \phi}^{\gamma}(x)\bigg|_{(\theta_i,\phi_i)=A, \forall i\neq a}
        &= \cos^2(x_a\phi_a)\sin^{2(1-x_a)}\left(\frac{\theta_a}{2}\right)\cos^{2x_a}\left(\frac{\theta_a}{2}\right)\added{\mathbb{I}}[\forall i\neq a, x_i=1]
        \nonumber\\
        &\quad+ \sin^2(\gamma)\sin^2(\bar{x}_a\phi_a)\cos^{2(1-x_a)}\left(\frac{\theta_a}{2}\right)\sin^{2x_a}\left(\frac{\theta_a}{2}\right)\added{\mathbb{I}}[\forall i\neq a, x_i=0]
        \nonumber\\
        &\quad+ \cos^2(\gamma)\sin^2(x_a\phi_a)\sin^{2(1-x_a)}\left(\frac{\theta_a}{2}\right)\cos^{2x_a}\left(\frac{\theta_a}{2}\right)\added{\mathbb{I}}[\forall i\neq a, x_i=1]
        \nonumber\\
        &=
        \begin{cases}
        \sin^2(\gamma)\sin^2(\phi_a)\cos^2\left(\frac{\theta_a}{2}\right), &\text{ for } x=0^n,\\
        \cos^2(\phi_a)\cos^2\left(\frac{\theta_a}{2}\right)+\cos^2(\gamma)\sin^2(\phi_a)\cos^2\left(\frac{\theta_a}{2}\right), &\text{ for } x=1^n,\\
        \sin^2\left(\frac{\theta_a}{2}\right), &\text{ for } x_a=0, x_i=1, \forall i\neq a,\\
        0, &\text{ otherwise,}
        \end{cases}
    \end{align}
which yields
\begin{align}
\$_a^\gamma(\theta,\phi)\bigg|_{(\theta_i, \phi_i)=A, \forall i\neq a}
        &= \left.\left(2\sum_{\substack{x \in{\{0,1\}^n} \\ x\neq 0^n \\ x_a = 0}} p_{A^n}^\gamma(x)
        + \sum_{k=1}^{n} \left(2-\frac{1}{k}\right)\sum_{\substack{x \in{\{0,1\}^n} \\ x_a=1 \\ \mathrm{wt}(x)=k}}p_{A^n}^\gamma(x)\right)\right|_{(\theta_i, \phi_i)=A, \forall i\neq a}
        \nonumber\\
        &=2\sin^2\left(\frac{\theta_a}{2}\right)+\left(2-\frac{1}{n}\right)\left(\cos^2(\phi_a)\cos^2\left(\frac{\theta_a}{2}\right)+\cos^2(\gamma)\sin^2(\phi_a)\cos^2\left(\frac{\theta_a}{2}\right)\right).
    \end{align}
\end{widetext}

Recall from Eq.~(\ref{eq:paylonga}) that $\$^\gamma_i(A^n)=\left(2-\frac{1}{n}\right)\cos^2(\gamma)$.
With this in mind, we now maximize $\$^\gamma_a(\theta, \phi)$ to determine the conditions under which it is bounded above by $\$^\gamma_i(A^n)$.
We factor the second term of $\$^\gamma_a(\theta, \phi)$ to get
\begin{align}
        \$^\gamma_a&(\theta, \phi)\bigg|_{(\theta_i,\phi_i)=A, \forall i\neq a}=2\sin^2\left(\frac{\theta_a}{2}\right)+\left(2-\frac{1}{n}\right)\nonumber\\
        &\quad\times\cos^2\left(\frac{\theta_a}{2}\right)(\cos^2(\phi_a)+\cos^2(\gamma)\sin^2(\phi_a)).
    \end{align}
First, we maximize with respect to $\phi_a$. Notice that 
\begin{align*}
\cos^2(\phi_a)+\cos^2(\gamma)\sin^2(\phi_a)
    &=1-\sin^2(\phi_a)(1-\cos^2(\gamma))\\
    &=1-\sin^2(\phi_a)\sin^2(\gamma).
\end{align*}
This value is maximized when $\phi_a=0$ or $\pi$, in which case $\cos^2(0)+\cos^2(\gamma)\sin^2(0)=1$. Under these conditions, we then have

\begin{align}
        \$^\gamma_a(\theta, \phi)\bigg|_{(\theta_i,\phi_i)=A, \forall i\neq a}
        &=2\sin^2\left(\frac{\theta_a}{2}\right) \nonumber\\
        &\quad+\left(2-\frac{1}{n}\right)\cos^2\left(\frac{\theta_a}{2}\right).
    \end{align}
Then, maximizing with respect to $\theta_a$ gives us $\theta_a=\pi$ and $\$_a^\gamma(\theta, \phi)=2$.
Hence, $\$_a^\gamma(\theta, \phi)$ is maximized at a value of 2 when $\phi_a=0$ or $\pi$, and $\theta_a=\pi$.
Because this maximum value of 2 is greater than $\$_a^\gamma(A^n)=\left(2-\frac{1}{n}\right)\cos^2(\gamma)$ for all values of $\gamma$, the collective strategy profile $A^n$ is not a Nash equilibrium when $n$ is odd, and leveraging entanglement cannot change that fact.

\noindent\paragraph{Case 2: $n$ is even.}
When $n$ is even, the probability and payoff functions evaluate to

\begin{widetext}
    \begin{align}
        p^\gamma_{\theta, \phi}(x)\bigg|_{(\theta_i,\phi_i)=A, \forall i\neq a}&=\sin^2(x_a\phi_a)\sin^{2(1-x_a)}\left(\frac{\theta_a}{2}\right)\cos^{2x_a}\left(\frac{\theta_a}{2}\right)\added{\mathbb{I}}[\forall i\neq a, x_i=1]\nonumber\\
        &\quad+\sin^2(\gamma)\cos^2(\bar{x}_a\phi_a)\cos^{2(1-x_a)}\left(\frac{\theta_a}{2}\right)\sin^{2x_a}\left(\frac{\theta_a}{2}\right)\added{\mathbb{I}}[\forall i\neq a, x_i=0]\nonumber\\
        &\quad+\cos^2(\gamma)\cos^2(x_a\phi_a)\sin^{2(1-x_a)}\left(\frac{\theta_a}{2}\right)\cos^{2x_a}\left(\frac{\theta_a}{2}\right)\added{\mathbb{I}}[\forall i\neq a, x_i=1]\nonumber\\
        &=\begin{cases}
            \sin^2(\gamma)\cos^2(\phi_a)\cos^2\left(\frac{\theta_a}{2}\right), &\text{ for } x=0^n,\\
            \sin^2(\phi_a)\cos^2\left(\frac{\theta_a}{2}\right)+\cos^2(\gamma)\cos^2(\phi_a)\cos^2\left(\frac{\theta_a}{2}\right), &\text{ for } x=1^n,\\
            \sin^2(\gamma)\sin^2\left(\frac{\theta_a}{2}\right), &\text{ for } x_a=1, x_i=0, \forall i\neq a,\\
            \cos^2(\gamma)\sin^2\left(\frac{\theta_a}{2}\right), &\text{ for } x_a=0, x_i=1, \forall i\neq a,\\
            0, &\text{ otherwise,}
        \end{cases}
    \end{align}
and
\begin{equation}
    \begin{aligned}
        \$^\gamma_a(\theta, \phi)\bigg|_{(\theta_i,\phi_i)=A, \forall i\neq a}&=\added{\left(\sin^2(\gamma)+2\cos^2(\gamma)\right)}\sin^2\left(\frac{\theta_a}{2}\right)\\
        &\quad+\left(2-\frac{1}{n}\right)\left(\sin^2(\phi_a)\cos^2\left(\frac{\theta_a}{2}\right)+\cos^2(\gamma)\cos^2(\phi_a)\cos^2\left(\frac{\theta_a}{2}\right)\right).
    \end{aligned}    
\end{equation}
\end{widetext}
Recall from Eq.~(\ref{eq:paylonga}) that $\$^\gamma_a(A^n)=2-\frac{1}{n}$. Just as before, we now maximize $\$^\gamma_a(\theta, \phi)$ to determine the conditions under which it is bounded above by $\$^\gamma_a(A^n)$.
To simplify the computation, we make the following substitutions:
\begin{align}
        t&=\cos^2(\gamma), &u=\left(2-\frac{1}{n}\right), \nonumber\\
        v&=\sin\left(\frac{\theta_a}{2}\right),&y =\cos\left(\frac{\theta_a}{2}\right).
    \end{align}
Substituting these into the payoff function, we have
\begin{align}
        \$^\gamma_a(\theta, \phi)\bigg|_{(\theta_i,\phi_i)=A, \forall i\neq a} &= \added{(1+t)v^2}+uy^2 \nonumber\\
        &\quad\times\left(\sin^2(\phi_a)+t\cos^2(\phi_a)\right).
\end{align}
We first maximize with respect to $\phi_a$. Note that $\sin^2(\phi_a)+t\cos^2(\phi_a)=t+(1-t)\sin^2(\phi_a)$. Since $0\leq t\leq 1$, the payoff is maximized when $\sin^2(\phi_a)=1$, or equivalently when $\phi_a\equiv\frac{\pi}{2}\pmod{\pi}$. Next, following the process for the previous case, we maximize with respect to $\theta_a$.
We have
\begin{align}
        \$^\gamma_a\left(\theta, \phi\right)\bigg|_{(\theta_i,\phi_i)=A, \forall i\neq a}&=\added{(1+t)v^2+uy^2} \nonumber\\
        &=\added{(1+t)(1-y^2)+uy^2} \nonumber\\
        &=\added{1+t+(u-t-1)y^2}.
    \end{align}
\added{If $u-t-1<0$, the expression is maximized at $y^2=0$, which corresponds to $\theta_a=\pi$. The resulting payoff is then $1+t$. Since $u-t-1<0$ is equivalent to $1+t>u$, this constitutes a profitable deviation. Hence, $A^n$ is a Nash equilibrium if and only if $u-t-1\geq0$. Equivalently, $A^n$ is a Nash equilibrium precisely when}
\[
\added{
\gamma \geq \arccos\left(\sqrt{1-\frac{1}{n}}\right)=\arcsin\left(\frac{1}{\sqrt{n}}\right).
}
\]
\added{This gives the lower bound}
\begin{equation}
    \added{
    \gamma_{A^n}^{\min}(n)=\arcsin\left(\frac{1}{\sqrt{n}}\right),
    }
\end{equation}
\added{with the corresponding fidelity bound
\begin{align}
    F\left(\lvert\psi(\gamma)\rangle,\lvert\psi(\tfrac{\pi}{2})\rangle\right)\geq\frac{1}{2}\left(1+\frac{1}{\sqrt{n}}\right).
\end{align}
}

This result is shown in Fig.~\ref{fig:plot2}, which demonstrates again that maximal entanglement is not required to realize symmetric Nash equilibria in the Quantum Volunteer's Dilemma. It is also interesting to note that, for this strategy profile, increasing the number of players increases the range of acceptable values of $\gamma$. The figure also demonstrates that the system is not constrained to have $n\leq9$ players as it was with the $Q^n$ strategy profile. The system is constrained, however, to even values of $n$.
\begin{figure}[!htbp]
    \centering
    \includegraphics[width=0.85\linewidth]{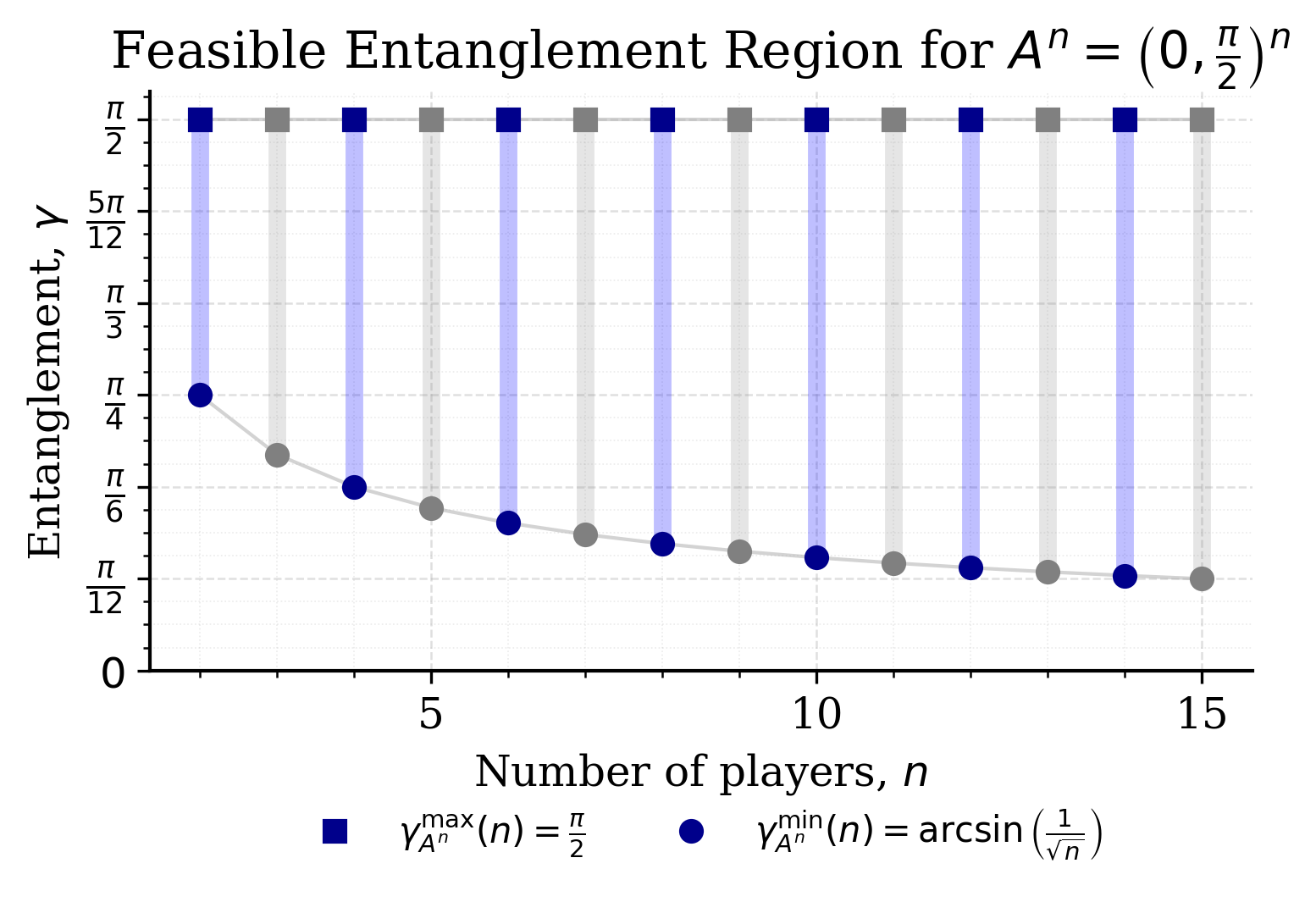}
    \caption{Lower bound on the entanglement parameter $\gamma$ as a function of the number of players $n$ for the $A^n = \left(0, \frac{\pi}{2}\right)^n$ strategy profile. The blue shaded lines along discrete values of even $n$ indicate that $\gamma \geq \gamma_{\min}(n)$.}
    \label{fig:plot2}
\end{figure}

\FloatBarrier

\section{Conclusion}\revnote{Figure 2 updated to reflect the corrected lower bound.}
\label{sec6}
In this study, we introduced a generalization of the Quantum Volunteer's Dilemma by incorporating the entanglement parameter $\gamma$, thereby extending the EWL-based formulation of Koh, Kumar, and Goh~\cite{koh2025quantum}. This generalization allowed us to analyze the extent to which the previously identified symmetric Nash equilibria depend on maximal entanglement to exist. In particular, we studied two families of symmetric Nash equilibria.

The first relates to the strategy profile $Q^n=(0,\frac{\pi}{n})^n$ while the second relates to $A^n=(0,\frac{\pi}{2})^n$. In both cases, we showed that the profile remains a Nash equilibrium for suitable non-maximal values of $\gamma$, provided that a specific inequality is satisfied. These conditions quantify how the entanglement threshold for each symmetric Nash equilibrium depends on the number of players $n$.

In summary, our results show that the strategic advantages of the Quantum Volunteer's Dilemma are not confined to the case of maximal entanglement. \added{Symmetric Nash equilibria can persist under partial entanglement, suggesting a degree of robustness to reduced entanglement within the idealized pure-state family considered here. Extending this analysis to decoherence and mixed-state noise is an important and natural direction for future work.}
\begin{acknowledgments}
This research is supported by the Agency for Science, Technology and Research (A*STAR) under the Quantum Innovation Centre (Q.InC) Strategic Research and Translational Thrust (SRTT).
\end{acknowledgments}

\appendix
\onecolumngrid
\section{Proof of Lemma \texorpdfstring{\ref{lemma:prob_amplitude}}{1}---Deriving the Generalized Probability Amplitude}
\label{app:lemma1}
\noindent\textit{Proof of Lemma~\ref{lemma:prob_amplitude}.} \
We begin with the entanglement operator $J^\gamma$:
    \begin{align}
        J^\gamma = \mathrm{e}^{- \mathrm{i}\frac{\gamma}{2} \, Y^{\otimes n}}
        = \cos\left(\frac{\gamma}{2}\right) - \mathrm{i} \sin\left(\frac{\gamma}{2}\right)Y^{\otimes n}.
    \end{align}
The initial state that the players first receive is then
    \begin{align}
        J^\gamma \lvert 0\rangle^{\otimes n}
        &=\left(\cos\left({\frac{\gamma}{2}}\right)I - \mathrm{i}\sin\left(\frac{\gamma}{2}\right)Y^{\otimes n}\right)\lvert 0\rangle^{\otimes n} \nonumber\\
        &= \cos\left(\frac{\gamma}{2}\right)\lvert 0\rangle^{\otimes n} - \mathrm{i}\sin\left(\frac{\gamma}{2}\right)Y^{\otimes n}\lvert 0\rangle^{\otimes n} \nonumber\\
        &=\cos\left(\frac{\gamma}{2}\right)\lvert 0\rangle^{\otimes n} - (\mathrm{i}^{n+1})\sin\left(\frac{\gamma}{2}\right)\lvert 1\rangle^{\otimes n}.
    \end{align}
Allowing all players to apply local operations renders
    \begin{align}
        \bigotimes_{i=1}^{n} U(\theta_i,\phi_i)\cdot J^\gamma \lvert
        0\rangle^{\otimes n}
        &=\bigotimes_{i=1}^{n} U(\theta_i,\phi_i)\cdot\left(\cos\left(\frac{\gamma}{2}\right)\lvert 0\rangle^{\otimes n} - (\mathrm{i}^{n+1})\sin\left(\frac{\gamma}{2}\right)\lvert 1\rangle^{\otimes n}\right)\nonumber\\
        &= \bigotimes_{i=1}^{n} U(\theta_i,\phi_i)\cdot\cos\left(\frac{\gamma}{2}\right)\lvert 0\rangle^{\otimes n} - (\mathrm{i}^{n+1})\bigotimes_{i=1}^{n} U(\theta_i,\phi_i)\cdot\sin\left(\frac{\gamma}{2}\right)\lvert 1\rangle^{\otimes n}.
    \label{eq:amp11}
\end{align}
The elements of the measurement basis are then rewritten as follows:
    \begin{align}
        J^\gamma \lvert \bar{x}\rangle
        &= \cos\left(\frac{\gamma}{2}\right)\lvert \bar{x}\rangle-\mathrm{i}\sin\left(\frac{\gamma}{2}\right)\bigotimes_{i=1}^{n}Y\lvert \bar{x}_i\rangle= \cos\left(\frac{\gamma}{2}\right)\lvert \bar{x}\rangle-\left(\mathrm{i}^{n+1}\right)(-1)^{\mathrm{wt}(\bar{x})}\sin\left(\frac{\gamma}{2}\right)\lvert{x}\rangle.
    \label{eq:amp12}
\end{align}
We now expand Eq.~\eqref{eq:amp1} with Eq.~\eqref{eq:amp11} and Eq.~\eqref{eq:amp12}:
    \begin{align}
        a^\gamma_{\theta,\phi}(x)
        &= \langle \bar{x} | J^{\gamma\dagger}\cdot  \bigotimes_{i=1}^{n} U(\theta_i,\phi_i)\cdot J^\gamma | 0\rangle^{\otimes n}= \left(\cos\left(\frac{\gamma}{2}\right)\langle \bar{x}\lvert-(\mathrm{i}^{-(n+1)})(-1)^{\mathrm{wt}(\bar{x})}\sin\left(\frac{\gamma}{2}\right)\langle{x}\vert\right)
        \nonumber\\
        &\quad\cdot \left(\bigotimes_{i=1}^{n} U(\theta_i,\phi_i)\cos\left(\frac{\gamma}{2}\right)\lvert 0\rangle - \left(\mathrm{i}^{n+1}\right)\bigotimes_{i=1}^{n} U(\theta_i,\phi_i)\sin\left(\frac{\gamma}{2}\right)\lvert 1\rangle\right) 
        \nonumber\\
        &= \cos^2\left(\frac{\gamma}{2}\right)\left(\prod_{i=1}^n\langle \bar{x}_i \rvert    U(\theta_i,\phi_i) \lvert 0\rangle\right)- \left(\mathrm{i}^{n+1}\right)\sin\left(\frac{\gamma}{2}\right)\cos\left(\frac{\gamma}{2}\right)\left(\prod_{i=1}^n\langle \bar{x}_i \rvert U(\theta_i,\phi_i)\lvert 1\rangle\right)
        \nonumber\\
        &\quad- \left(\mathrm{i}^{-(n+1)}\right)(-1)^{\mathrm{wt}(\bar{x})}\sin\left(\frac{\gamma}{2}\right)\cos\left(\frac{\gamma}{2}\right)\left(\prod_{i=1}^n\langle x_i \rvert U(\theta_i,\phi_i)\lvert 0\rangle\right)
        \nonumber\\
        &\quad+ (-1)^{\mathrm{wt}(\bar{x})}\sin^2\left(\frac{\gamma}{2}\right) \left(\prod_{i=1}^n\langle x_i \rvert U(\theta_i,\phi_i)\lvert 1\rangle\right).
    \end{align}
We then change the index from $\bar{x}$ to $x$.
Since $\bar{x}_i = 1 - x_i$, note that
    \begin{align}
        \prod_{i}\langle \bar{x}_i \rvert U_i\lvert 0\rangle&=\prod_{i:x_i=0}\langle1\rvert U_i\lvert 0\rangle \prod_{i:x_i=1}\langle0 \rvert U_i\lvert 0\rangle,
        \nonumber\\
        \prod_{i}\langle \bar{x}_i \rvert U_i\lvert 1\rangle&= \prod_{i:x_i=0}\langle1\rvert U_i\lvert 1\rangle \prod_{i:x_i=1}\langle0 \rvert U_i\lvert 1\rangle,
        \nonumber\\
        \prod_{i}\langle x_i \rvert U_i\lvert 0\rangle&=\prod_{i:x_i=0}\langle0 \rvert U_i\lvert 0\rangle \prod_{i:x_i=1}\langle1\rvert U_i\lvert 0\rangle, 
        \nonumber\\
        \prod_{i}\langle x_i \rvert U_i\lvert 1\rangle&=\prod_{i:x_i=0}\langle0 \rvert U_i\lvert 1\rangle \prod_{i:x_i=1}\langle1\rvert U_i\lvert 1\rangle.
    \end{align}
Applying this to the amplitude equation, we have
    \begin{align}
        a^\gamma_{\theta,\phi}(x)
        &= \cos^2\left(\frac{\gamma}{2}\right) \prod_{i:x_i=0}\langle1\rvert U_i\lvert 0\rangle\prod_{i:x_i=1}\langle0 \rvert U_i\lvert 0\rangle -\left(\mathrm{i}^{n+1}\right)\sin\left(\frac{\gamma}{2}\right)\cos\left(\frac{\gamma}{2}\right) \prod_{i:x_i=0}\langle1\rvert U_i\lvert 1\rangle \prod_{i:x_i=1}\langle0 \rvert U_i\lvert 1\rangle \nonumber\\
        &\quad- \left(\mathrm{i}^{-(n+1)}\right)(-1)^{\mathrm{wt}(\bar{x})}\sin\left(\frac{\gamma}{2}\right)\cos\left(\frac{\gamma}{2}\right)\prod_{i:x_i=0}\langle0 \rvert U_i\lvert 0\rangle \prod_{i:x_i=1}\langle1\rvert U_i\lvert 0\rangle \nonumber\\
        &\quad+ (-1)^{\mathrm{wt}(\bar{x})}\sin^2\left(\frac{\gamma}{2}\right) \prod_{i:x_i=0}\langle0 \rvert U_i\lvert 1\rangle \prod_{i:x_i=1}\langle1\rvert U_i\lvert 1\rangle \nonumber\\
        &= \cos^2\left(\frac{\gamma}{2}\right) \prod_{i:x_i=0}\left(-\sin\left(\frac{\theta_i}{2}\right)\right) \prod_{i:x_i=1}\left(\mathrm{e}^{\mathrm{i}\phi_i}\cos\left(\frac{\theta_i}{2}\right)\right) \nonumber\\
        &\quad-\left(\mathrm{i}^{n+1}\right)\sin\left(\frac{\gamma}{2}\right)\cos\left(\frac{\gamma}{2}\right) \prod_{\substack{i:x_i=0}}\left(\mathrm{e}^{-\mathrm{i} \phi_i} \cos\left(\frac{\theta_i}{2}\right)\right) \prod_{\substack{i:x_i=1}}\sin\left(\frac{\theta_i}{2}\right)
        \nonumber\\
        &\quad- \left(\mathrm{i}^{-(n+1)}\right)(-1)^{\mathrm{wt}(\bar{x})}\sin\left(\frac{\gamma}{2}\right)\cos\left(\frac{\gamma}{2}\right)\prod_{\substack{i:x_i=0}}\left(\mathrm{e}^{\mathrm{i} \phi_i}\cos\left(\frac{\theta_i}{2}\right)\right) \prod_{\substack{i:x_i=1}}\left(-\sin\left(\frac{\theta_i}{2}\right)\right)
        \nonumber\\
        &\quad+ (-1)^{\mathrm{wt}(\bar{x})}\sin^2\left(\frac{\gamma}{2}\right) \prod_{\substack{i:x_i=0}}\left(\sin\left(\frac{\theta_i}{2}\right)\right) \prod_{\substack{i:x_i=1}}\left(\mathrm{e}^{-\mathrm{i} \phi_i}\cos\left(\frac{\theta_i}{2}\right)\right)\nonumber\\
        &= (-1)^{\mathrm{wt}(\bar{x})}\cos^2\left(\frac{\gamma}{2}\right)\mathrm{e}^{\mathrm{i} \sum_{(i : x_i = 1)} \phi_i}\prod_{i:x_i=0}\sin{\left(\frac{\theta_i}{2}\right)}\prod_{i:x_i=1}\cos{\left(\frac{\theta_i}{2}\right)}\nonumber\\
        &\quad-\left(\mathrm{i}^{n+1}\right)\sin\left(\frac{\gamma}{2}\right)\cos\left(\frac{\gamma}{2}\right)\mathrm{e}^{-\mathrm{i} \sum_{(i : x_i = 0)} \phi_i}\prod_{i:x_i=0}\cos{\left(\frac{\theta_i}{2}\right)}\prod_{i:x_i=1}\sin{\left(\frac{\theta_i}{2}\right)}\nonumber\\
        &\quad-\left(-\mathrm{i}^{n+1}\right)(-1)^{\mathrm{wt}(\bar{x})}(-1)^{\mathrm{wt}(x)}\sin\left(\frac{\gamma}{2}\right)\cos\left(\frac{\gamma}{2}\right)\mathrm{e}^{\mathrm{i} \sum_{(i : x_i = 0)} \phi_i}\prod_{i:x_i=0}\cos{\left(\frac{\theta_i}{2}\right)}\prod_{i:x_i=1}\sin{\left(\frac{\theta_i}{2}\right)}\nonumber\\
        &\quad+ (-1)^{\mathrm{wt}(\bar{x})}\sin^2\left(\frac{\gamma}{2}\right)\mathrm{e}^{-\mathrm{i} \sum_{(i : x_i = 1)} \phi_i}\prod_{i:x_i=0}\sin{\left(\frac{\theta_i}{2}\right)}\prod_{i:x_i=1}\cos{\left(\frac{\theta_i}{2}\right)}.
    \end{align}
Note that, in the first term, we factored $(-1)^{\mathrm{wt}(\bar{x})}$ from $\prod_{(i:x_i=0)}\left(-\sin{\left(\frac{\theta_i}{2}\right)}\right)$. Similarly, we factored $(-1)^{\mathrm{wt}(x)}$ from $\prod_{(i:x_i=1)}\left(-\sin{\left(\frac{\theta_i}{2}\right)}\right)$.
Further simplifying, we get
    \begin{align}
        a^\gamma_{\theta,\phi}(x)
        &=(-1)^{\mathrm{wt}(\bar{x})}\left(\cos^2\left(\frac{\gamma}{2}\right)\mathrm{e}^{\mathrm{i} \sum_{(i : x_i = 1)} \phi_i} + \sin^2\left(\frac{\gamma}{2}\right)\mathrm{e}^{-\mathrm{i} \sum_{(i : x_i = 1)}\phi_i}\right)\prod_{i:x_i=0}\sin\left({\frac{\theta_i}{2}}\right)\prod_{i:x_i=1}\cos\left({\frac{\theta_i}{2}}\right)\nonumber\\
        &\quad- \left(\mathrm{i}^{n+1}\right)\sin\left(\frac{\gamma}{2}\right)\cos\left(\frac{\gamma}{2}\right)\mathrm{e}^{-\mathrm{i} \sum_{(i : x_i = 0)}\phi_i}\prod_{i:x_i=0}\cos\left({\frac{\theta_i}{2}}\right)\prod_{i:x_i=1}\sin\left({\frac{\theta_i}{2}}\right)\nonumber\\
        &\quad+ \mathrm{i}^{n+1}\sin\left(\frac{\gamma}{2}\right)\cos\left(\frac{\gamma}{2}\right)\mathrm{e}^{\mathrm{i} \sum_{(i : x_i = 0)}\phi_i}\prod_{i:x_i=0}\cos\left({\frac{\theta_i}{2}}\right)\prod_{i:x_i=1}\sin\left({\frac{\theta_i}{2}}\right)\nonumber\\
        &=(-1)^{\mathrm{wt}(\bar{x})}\left(\cos^2\left(\frac{\gamma}{2}\right)\mathrm{e}^{\mathrm{i} \sum_{(i : x_i = 1)} \phi_i} + \sin^2\left(\frac{\gamma}{2}\right)\mathrm{e}^{-\mathrm{i} \sum_{(i : x_i = 1)}\phi_i}\right)\prod_{i:x_i=0}\sin\left({\frac{\theta_i}{2}}\right)\prod_{i:x_i=1}\cos\left({\frac{\theta_i}{2}}\right)\nonumber\\
        &\quad+ \mathrm{i}^{n+1}\sin\left(\frac{\gamma}{2}\right)\cos\left(\frac{\gamma}{2}\right)\left(\mathrm{e}^{\mathrm{i} \sum_{(i : x_i = 0)} \phi_i} - \mathrm{e}^{-\mathrm{i} \sum_{(i : x_i = 0)}\phi_i}\right)\prod_{i:x_i=0}\cos\left({\frac{\theta_i}{2}}\right)\prod_{i:x_i=1}\sin\left({\frac{\theta_i}{2}}\right)\nonumber\\
        &= (-1)^{\mathrm{wt}(\bar{x})}\left(\cos^2\left(\frac{\gamma}{2}\right)\mathrm{e}^{\mathrm{i} \sum_{(i : x_i = 1)} \phi_i} + \sin^2\left(\frac{\gamma}{2}\right)\mathrm{e}^{-\mathrm{i} \sum_{(i : x_i = 1)}\phi_i}\right)\prod_{i:x_i=0}\sin\left({\frac{\theta_i}{2}}\right)\prod_{i:x_i=1}\cos\left({\frac{\theta_i}{2}}\right)\nonumber\\
        &\quad- 2\mathrm{i}^{n}\sin\left(\frac{\gamma}{2}\right)\cos\left(\frac{\gamma}{2}\right)\sin\left({\sum_{i : x_i = 0}\phi_i}\right)\prod_{i:x_i=0}\cos\left({\frac{\theta_i}{2}}\right)\prod_{i:x_i=1}\sin\left({\frac{\theta_i}{2}}\right).
        \label{eq:a_gamma_theta_phi_x}
    \end{align}
Using the mathematical identity
\[
\cos^2(\alpha) \mathrm{e}^{\mathrm{i}\beta}+\sin^2(\alpha) \mathrm{e}^{-\mathrm{i}\beta}=\cos(\beta)+\mathrm{i}\cos(2\alpha)\sin(\beta),
\]
with $\alpha = \frac{\gamma}{2}$ and $\beta = \sum_{\substack{(i : x_i = 1)}}\phi_i$, we can rewrite the first term in Eq.~\eqref{eq:a_gamma_theta_phi_x} to obtain
    \begin{align}
      a^\gamma_{\theta,\phi}(x)&= (-1)^{\mathrm{wt}(\bar{x})}\left(\cos\left(\sum_{\substack{i : x_i = 1}}\phi_i\right) +\mathrm{i}\cos(\gamma) \sin\left(\sum_{\substack{i : x_i = 1}}\phi_i\right)\right)\prod_{i:x_i=0}\sin\left({\frac{\theta_i}{2}}\right)\prod_{i:x_i=1}\cos\left({\frac{\theta_i}{2}}\right) \nonumber\\
        &\quad- 2\mathrm{i}^{n}\sin\left(\frac{\gamma}{2}\right)\cos\left(\frac{\gamma}{2}\right)\sin\left({\sum_{i : x_i = 0}\phi_i}\right)\prod_{i:x_i=0}\cos\left({\frac{\theta_i}{2}}\right)\prod_{i:x_i=1}\sin\left({\frac{\theta_i}{2}}\right)
        \nonumber\\
        &=(-1)^{\mathrm{wt}(\bar{x})}\cos\left(\sum_{\substack{i : x_i = 1}}\phi_i\right)\prod_{i:x_i=0}\sin\left({\frac{\theta_i}{2}}\right)\prod_{i:x_i=1}\cos\left({\frac{\theta_i}{2}}\right)
        \nonumber\\
        &\quad+ \mathrm{i}(-1)^{\mathrm{wt}(\bar{x})}\cos(\gamma)\sin\left({\sum_{i : x_i = 1}\phi_i}\right)\prod_{i:x_i=0}\sin\left({\frac{\theta_i}{2}}\right)\prod_{i:x_i=1}\cos\left({\frac{\theta_i}{2}}\right)
        \nonumber\\
        &\quad- \mathrm{i}^n\sin(\gamma)\sin\left({\sum_{i : x_i = 0}\phi_i}\right)\prod_{i:x_i=0}\cos\left({\frac{\theta_i}{2}}\right)\prod_{i:x_i=1}\sin\left({\frac{\theta_i}{2}}\right).
    \end{align}
Now, substituting Eq.~\eqref{eq:c}, Eq.~\eqref{eq:s}, and Eq.~\eqref{eq:w} into the expression above, we obtain our final result:
    \begin{align}
        a^\gamma_{\theta, \phi}(x) &= (-1)^{\mathrm{wt}(\bar{x})}c_{\theta, \phi}(x) + \mathrm{i}(-1)^{\mathrm{wt}(\bar{x})}\cos({\gamma})w_{\theta, \phi}(x)-\mathrm{i}^n\sin({\gamma})s_{\theta, \phi}(x).
    \end{align}
\hfill\qed

\section{Proof of Lemma \texorpdfstring{\ref{lemma:prob}}{2}---Deriving the Generalized Probability Mass Function}
\label{app:lemma2}

In this appendix, we prove Lemma~\ref{lemma:prob} by deriving the probability mass function from the probability amplitude function.

\noindent\textit{Proof of Lemma~\ref{lemma:prob}.} \
By taking the squared magnitude of Eq.~\eqref{eq:amp2}, we recover the probability mass function. We consider the cases where $n$ is even and where $n$ is odd separately.
\paragraph{Case 1: $n$ is even.}
When $n$ is even, the magnitude squared of Eq.~\eqref{eq:amp2} is
\begin{equation}
    \begin{aligned}
        p^\gamma_{\theta,\phi}(x)&=\added{\left|(-1)^{\mathrm{wt}(\bar{x})}c_{\theta,\phi}(x)+\mathrm{i}(-1)^{\mathrm{wt}(\bar{x})}\cos(\gamma)w_{\theta,\phi}(x)-(-1)^{\frac{n}{2}}\sin(\gamma)s_{\theta,\phi}(x)\right|^2}\\
        &=\added{\left[(-1)^{\mathrm{wt}(\bar{x})}c_{\theta,\phi}(x)-(-1)^{\frac{n}{2}}\sin(\gamma)s_{\theta,\phi}(x)\right]^2+\left[(-1)^{\mathrm{wt}(\bar{x})}\cos(\gamma)w_{\theta,\phi}(x)\right]^2}\\
        &=\added{c_{\theta,\phi}^2(x)+\sin^2(\gamma)s_{\theta,\phi}^2(x)+\cos^2(\gamma)w_{\theta,\phi}^2(x)-2(-1)^{\mathrm{wt}(\bar{x})+\frac{n}{2}}\sin(\gamma)c_{\theta,\phi}(x)s_{\theta,\phi}(x)}.
    \end{aligned}
    \label{eq:prob-even}
\end{equation}
To finalize our result, we make use of the following property:

\[
(-1)^{\mathrm{wt}(\bar{x})+\frac{n}{2}}=(-1)^{n-\mathrm{wt}(x)+\frac{n}{2}}=(-1)^{2(\frac{n}{2}-\mathrm{wt}(x))}(-1)^{\frac{n}{2}+\mathrm{wt}(x)}=(-1)^{\frac{n}{2}+\mathrm{wt}(x)}.
\]
Hence, our result is rendered as
\begin{equation}
    \begin{aligned}
        p^\gamma_{\theta, \phi}(x)&=c^2_{\theta, \phi}(x)+\sin^2(\gamma)s^2_{\theta, \phi}(x)+\cos^2(\gamma)w^2_{\theta, \phi}(x)-2(-1)^{\frac{n}{2}+\mathrm{wt}(x)}\sin(\gamma)c_{\theta, \phi}(x)s_{\theta, \phi}(x).
    \end{aligned}
\end{equation}
\paragraph{Case 2: $n$ is odd.}
The magnitude squared of Eq.~\eqref{eq:amp2} in this case is
\begin{equation}
    \begin{aligned}
        p^\gamma_{\theta,\phi}(x)&=\left\lvert\left[(-1)^{\mathrm{wt}(\bar{x})}c_{\theta, \phi}(x)\right]+\mathrm{i}\left[(-1)^{\mathrm{wt}(\bar{x})}\cos(\gamma)w_{\theta, \phi}(x)-(-1)^{\frac{n-1}{2}}\sin(\gamma)s_{\theta, \phi}(x)\right]\right\rvert^2\\
        &= c^2_{\theta, \phi}(x)+\left[(-1)^{\mathrm{wt}(\bar{x})}\cos(\gamma)w_{\theta, \phi}(x)-(-1)^{\frac{n-1}{2}}\sin(\gamma)s_{\theta, \phi}(x)\right]^2\\
        &= c^2_{\theta, \phi}(x)+\sin^2(\gamma)s^2_{\theta, \phi}(x)+\cos^2(\gamma)w^2_{\theta, \phi}(x)-2(-1)^{\frac{n-1}{2}+\mathrm{wt}(\bar{x})}\sin(\gamma)\cos(\gamma)w_{\theta,\phi}(x)s_{\theta, \phi}(x).
    \end{aligned}
\end{equation}
To simplify the result, we make use of the following:
\[(-1)^{\frac{n-1}{2}+\mathrm{wt}(\bar{x})}=(-1)^{\frac{n-1}{2}+n-\mathrm{wt}(x)}=(-1)^{\frac{3n-1}{2}-\mathrm{wt}(x)}(-1)^{2\mathrm{wt}(x)+1-n}=(-1)^{\frac{n+1}{2}+\mathrm{wt}(x)}.\]
Our final result then becomes
\begin{equation}
    \begin{aligned}
        p^\gamma_{\theta,\phi}(x)
        &=c^2_{\theta, \phi}(x)+\sin^2(\gamma)s^2_{\theta, \phi}(x)+\cos^2(\gamma)w^2_{\theta, \phi}(x)-2(-1)^{\frac{n+1}{2}+\mathrm{wt}(x)}\sin(\gamma)\cos(\gamma)w_{\theta,\phi}(x)s_{\theta, \phi}(x)\\
        &=c^2_{\theta, \phi}(x)+\sin^2(\gamma)s^2_{\theta, \phi}(x)+\cos^2(\gamma)w^2_{\theta, \phi}(x)-(-1)^{\frac{n+1}{2}+\mathrm{wt}(x)}\sin(2\gamma)w_{\theta,\phi}(x)s_{\theta, \phi}(x).
    \end{aligned}
\end{equation}
Combining the results for the even and odd cases of $n$, the final probability mass function becomes
    \begin{align}
p_{\theta,\phi}^\gamma(x)&=c_{\theta,\phi}^2(x) + \sin^2(\gamma)s^2_{\theta, \phi}(x)+\cos^2(\gamma)w^2_{\theta, \phi}(x)\nonumber\\
        &\quad-s_{\theta, \phi}(x)(-1)^{\mathrm{wt}(x)}
        \begin{cases}
            2(-1)^{\frac{n}{2}}\sin(\gamma)c_{\theta, \phi}(x), &\text{ $n$ even},\\
            (-1)^{\frac{n+1}{2}}\sin(2\gamma)w_{\theta, \phi}(x), &\text{ $n$ odd}.
            \end{cases}
    \end{align}
\hfill\qed

\section{Evaluation of \texorpdfstring{$\mathbf{w_{Q^n}(x)}$}{wQnx}}
\label{app:q1}

In this appendix, we evaluate the quantity $w_{Q^n}(x)$ for the strategy profile $Q^n=(0, \frac{\pi}{n})^n$. The calculation shows that this function vanishes for every bit string $x$:
    \begin{align}
        w_{Q^n}(x)
        &= \sin\left({\sum_{i:x_i=1}\frac{\pi}{n}}\right)\prod_{i:x_i=0}\sin\left({\frac{0}{2}}\right)\prod_{i:x_i=1}\cos\left({\frac{0}{2}}\right) \nonumber\\
        &= \sin\left({\mathrm{wt}(x)\frac{\pi}{n}}\right)\prod_{i:x_i=0}\sin\left({\frac{0}{2}}\right)\prod_{i:x_i=1}\cos\left({\frac{0}{2}}\right)\nonumber\\
        &= \sin\left({\mathrm{wt}(x)\frac{\pi}{n}}\right)\left(0^{n-\mathrm{wt}(x)}\right)\left(1^{\mathrm{wt}(x)}\right)\nonumber\\
        &= \sin\left({\mathrm{wt}(x)\frac{\pi}{n}}\right)\left(0^{n-\mathrm{wt}(x)}\right)\nonumber\\
        &=\sin({\pi})\added{\mathbb{I}}[\mathrm{wt}(x)=n]\nonumber\\
        &=0.
    \end{align}

\section{Evaluation of \texorpdfstring{$\mathbf{w_{\theta, \phi}(x)}$}{wthetaphix} for Player \texorpdfstring{$\mathbf{a}$}{a} for \texorpdfstring{$\mathbf{Q^n}$}{Qn} Strategy Profile}
\label{app:q2}
In this appendix, we evaluate $w_{\theta,\phi}(x)$ when all players except player $a$ use the strategy $Q=(0,\frac{\pi}{n})$. The calculation separates the contribution of player $a$ from that of the remaining players, leading to a piecewise expression that depends on whether player $a$ volunteers.
    \begin{align}
        w_{\theta, \phi}(x)\bigg|_{(\theta_i, \phi_i)=Q, \forall i\neq a}&=\sin\left({\sum_{i:x_i=1}\phi_i}\right)\prod_{i:x_i=0}\sin\left({\frac{\theta_i}{2}}\right)\prod_{i:x_i=1}\cos\left({\frac{\theta_i}{2}}\right)\Bigg|_{(\theta_i, \phi_i)=Q, \forall i\neq a}\nonumber\\
        &=\sin\left({x_a\phi_a+\sum_{\substack{i:x_i=1 \\ i\neq a}} \phi_i}\right)\sin^{(1-x_a)}\left({\frac{\theta_a}{2}}\right)\cos^{x_a}\left({\frac{\theta_a}{2}}\right)\nonumber\\
        &\quad\times\prod_{\substack{i:x_i=0 \\ i\neq a}}\sin\left({\frac{\theta_i}{2}}\right)\prod_{\substack{i:x_i=1 \\ i\neq a}}\cos\left({\frac{\theta_i}{2}}\right)\Bigg|_{(\theta_i, \phi_i)=Q, \forall i\neq a}\nonumber\\
        &=\sin\left({x_a\phi_a+\sum_{\substack{i:x_i=1 \\ i\neq a}} \frac{\pi}{n}}\right)\sin^{(1-x_a)}\left({\frac{\theta_a}{2}}\right)\cos^{x_a}\left({\frac{\theta_a}{2}}\right)\added{\mathbb{I}}[\forall i \neq a, x_i=1]\nonumber\\
        &=\sin\left({x_a\phi_a+\left(\frac{n-1}{n}\right)\pi}\right)\sin^{(1-x_a)}\left({\frac{\theta_a}{2}}\right)\cos^{x_a}\left({\frac{\theta_a}{2}}\right)\added{\mathbb{I}}[\forall i \neq a, x_i=1]\nonumber\\
        &=\sin\left({x_a\phi_a+\frac{\pi n}{n}}-\frac{\pi}{n}\right)\sin^{(1-x_a)}\left({\frac{\theta_a}{2}}\right)\cos^{x_a}\left({\frac{\theta_a}{2}}\right)\added{\mathbb{I}}[\forall i \neq a, x_i=1]\nonumber\\
        &=-\sin\left({x_a\phi_a-\frac{\pi}{n}}\right)\sin^{(1-x_a)}\left({\frac{\theta_a}{2}}\right)\cos^{x_a}\left({\frac{\theta_a}{2}}\right)\added{\mathbb{I}}[\forall i \neq a, x_i=1]\nonumber\\
        &=\begin{cases}
            \sin\left({\frac{\pi}{n}}\right)\sin\left({\frac{\theta_a}{2}}\right), &\text{ for } x_a=0, x_i=1, \forall i \neq a,\\
            \sin\left({\frac{\pi}{n}-\phi_a}\right)\cos\left({\frac{\theta_a}{2}}\right), &\text{ for } x_a=1, x_i=1, \forall i \neq a,\\
            0, &\text{ otherwise}.
        \end{cases}
    \end{align}

\section{Evaluation of \texorpdfstring{$\mathbf{w_{A^n}(x)}$}{wAnx}}
\label{app:a1}
In this appendix, we evaluate $w_{A^n}(x)$ for the strategy profile $A^n=(0,\frac{\pi}{2})^n$. The resulting expression is nonzero only when \added{$n$ is odd and $x=1^n$ is a string of ones. In this case, $\prod_{(i:x_i=0)}0=1$, since the index set of the product operator is empty.}
    \begin{align}
        w_{A^n}(x)&=\sin\left(\sum_{i:x_i=1}\frac{\pi}{2}\right)\prod_{i:x_i=0}\sin\left(\frac{0}{2}\right)\prod_{i:x_i=1}\cos\left(\frac{0}{2}\right)\nonumber\\ 
        &= \sin\left(\mathrm{wt}(x)\frac{\pi}{2}\right)\prod_{i:x_i=0}0\nonumber\\
        &= \sin\left(\mathrm{wt}(x)\frac{\pi}{2}\right)\added{\mathbb{I}}[\forall i:x_i=1]\nonumber\\
        &=\sin\left(\mathrm{wt}(x)\frac{\pi}{2}\right)\added{\mathbb{I}}[\mathrm{wt}(x)=n]\nonumber\\
        &= \sin\left(\frac{n\pi}{2}\right)\added{\mathbb{I}}[\mathrm{wt}(x)=n]\nonumber\\
        &= \added{\mathbb{I}}[\mathrm{wt}(x)=n]
        \begin{cases}
        0, &\text{for even $n$},\\
        1, &\text{for $n \equiv 1 \pmod{4}$},\\
        -1, &\text{for $n \equiv 3 \pmod{4}$},
        \end{cases}\nonumber\\
        &= (-1)^{\frac{n-1}{2}}\added{\mathbb{I}}[n \text{ odd}]\added{\mathbb{I}}[x=1^n].
    \end{align}

\section{Evaluation of \texorpdfstring{$\mathbf{w_{\theta, \phi}(x)}$}{wthetaphix} for Player \texorpdfstring{$\mathbf{a}$}{a} in \texorpdfstring{$\mathbf{A^n}$}{An} Strategy Profile}
\label{app:a2}
In this appendix, we evaluate $w_{\theta,\phi}(x)$ when all players except player $a$ use the strategy $A=(0,\frac{\pi}{2})$.
    \begin{align}
        w_{\theta, \phi}(x)&\bigg|_{(\theta_i, \phi_i)=A, \forall i\neq a}= \sin\left(\sum_{i:x_i=1}\phi_i\right)\prod_{i:x_i=0}\sin\left(\frac{\theta_i}{2}\right)\prod_{i:x_i=1}\cos\left(\frac{\theta_i}{2}\right)\bigg|_{(\theta_i, \phi_i)=A, \forall i\neq a}\nonumber\\
        &=\sin\left(x_a\phi_a+\sum_{\substack{i:x_i=1\\ i\neq a}}\phi_i\right)\sin^{1-x_a}\left(\frac{\theta_a}{2}\right)\cos^{x_a}\left(\frac{\theta_a}{2}\right)\prod_{\substack{i:x_i=0\\i\neq a}}\sin\left(\frac{\theta_i}{2}\right)\prod_{\substack{i:x_i=1\\i\neq a}}\cos\left(\frac{\theta_i}{2}\right)\bigg|_{(\theta_i, \phi_i)=A, \forall i\neq a}\nonumber\\
        &= \sin\left(x_a\phi_a+\sum_{\substack{i:x_i=1\\i\neq a}}\frac{\pi}{2}\right)\sin^{1-x_a}\left(\frac{\theta_a}{2}\right)\cos^{x_a}\left(\frac{\theta_a}{2}\right)\added{\mathbb{I}}[\forall i\neq a, x_i=1]\nonumber\\
        &= \sin\left(x_a\phi_a+(n-1)\frac{\pi}{2}\right)\sin^{1-x_a}\left(\frac{\theta_a}{2}\right)\cos^{x_a}\left(\frac{\theta_a}{2}\right)\added{\mathbb{I}}[\forall i\neq a, x_i=1].
    \end{align}
For the next step, we make use of the following:
    \begin{align}
    \sin\left(x\phi+(n-1)\frac{\pi}{2}\right)&=
    \begin{cases}
    -\cos(x\phi), &\text{ for } n \equiv 0 \pmod{4},\\
    \sin(x\phi), &\text{ for } n \equiv 1 \pmod{4},\\
    \cos(x\phi), &\text{ for } n \equiv 2 \pmod{4},\\
    -\sin(x\phi), &\text{ for } n \equiv 3 \pmod{4},
    \end{cases}\nonumber\\
    &= -(-1)^{\frac{n}{2}}\cos(x\phi)\added{\mathbb{I}}[n \text{ even}]+(-1)^{\frac{n-1}{2}}\sin(x\phi)\added{\mathbb{I}}[n\text{ odd}].
    \end{align}
Hence, we have
    \begin{align}
        w_{\theta, \phi}(x)\bigg|_{(\theta_i,\phi_i)=A, \forall i\neq a}&=\left(-(-1)^{\frac{n}{2}}\cos(x_a\phi_a)\added{\mathbb{I}}[n \text{ even}]+(-1)^{\frac{n-1}{2}}\sin(x_a\phi_a)\added{\mathbb{I}}[n \text{ odd}]\right)
        \nonumber\\
        &\quad\times\sin^{1-x_a}\left(\frac{\theta_a}{2}\right)\cos^{x_a}\left(\frac{\theta_a}{2}\right)\added{\mathbb{I}}[\forall i\neq a, x_i=1].
    \end{align}
\twocolumngrid
\bibliography{bibliography}

\end{document}